\newcommand{\tr}[0]{\textrm{tr}}
\newcommand{\bra}[1]{\langle #1 |}
\newcommand{\ket}[1]{| #1 \rangle}
\newcommand{\braket}[2]{\langle #1 | #2 \rangle}
\theoremstyle{definition}
\newtheorem{observation}{Observation}
\newtheorem{theorem}{Theorem}
\newtheorem{proposition}{Proposition}
\newtheorem{lemma}{Lemma}
\newtheorem{conjecture}{Conjecture}
\begin{document}

\title{Fundamental limits on concentrating and preserving tensorized quantum resources}

\author{Jaehak Lee}
\email{jaehaklee@kias.re.kr}
\affiliation{School of Computational Sciences, Korea Institute for Advanced Study, Seoul 02455, Korea}
\author{Kyunghyun Baek}
\affiliation{School of Computational Sciences, Korea Institute for Advanced Study, Seoul 02455, Korea}
\affiliation{Eletronics and Telecommunications Research Institute, Daejeon 34129, Korea}
\author{Jiyong Park}
\affiliation{School of Basic Sciences, Hanbat National University, Daejeon 34158, Korea}
\author{Jaewan Kim}
\affiliation{School of Computational Sciences, Korea Institute for Advanced Study, Seoul 02455, Korea}
\author{Hyunchul Nha}
\email{hyunchul.nha@qatar.tamu.edu}
\affiliation{Department of Physics, Texas A \& M University at Qatar, P.O. Box 23874, Doha, Qatar}

\begin{abstract}
Quantum technology offers great advantages in many applications by exploiting quantum resources like nonclassicality, coherence, and entanglement. In practice, an environmental noise unavoidably affects a quantum system and it is thus an important issue to protect quantum resources from noise. In this work, we investigate the manipulation of quantum resources possessing the so-called tensorization property and identify the fundamental limitations on concentrating and preserving those quantum resources. We show that if a resource measure satisfies the tensorization property as well as the monotonicity, it is impossible to concentrate multiple noisy copies into a single better resource by free operations. Furthermore, we show that quantum resources cannot be better protected from channel noises by employing correlated input states on joint channels if the channel output resource exhibits the tensorization property. We address several practical resource measures where our theorems apply and manifest their physical meanings in quantum resource manipulation.
\end{abstract}

\maketitle

\section{\label{sec:introduction}Introduction}

Quantum technology has rapidly grown providing a variety of applications such as quantum computation \cite{Nielson}, quantum cryptography \cite{Gisin2007,Lo2014}, and quantum metrology \cite{Giovanetti2011}. Tasks in quantum information processing take advantages over their classical counterparts by exploiting quantum \emph{resources}. For instance, quantum teleportation makes use of pre-shared entanglement between two distant parties \cite{Pirandola2015}. Recently, nonclassicality has also been introduced as a resource enhancing the metrological power \cite{Yadin2018,Kwon2019,Ge2020,Tan2019}. It is generally a crucial issue to identify quantum resources essential to obtain quantum advantages. Quantum resource theories (QRTs) \cite{Chitambar2019} provide a comprehensive framework for quantifying and manipulating quantum resources of interest. QRTs have been developed for a diverse range of quantum resources including entanglement \cite{Horodecki09}, coherence \cite{Baumgratz2014,Winter2016,Streltsov2017}, nonclassicality \cite{Yadin2018,Kwon2019,Ge2020,Tan2019,Tan2017}, quantum non-Gaussianity \cite{Takagi2018,Albarelli2018,Park2019}, {\it etc}.. In each QRT, one defines free operations which do not create resource under consideration. Then, one of the fundamental problems in QRT is to investigate whether one resource state can be converted into another by using free operations only.

In a realistic situation, quantum resources are contaminated with noise due to interaction with environments. To overcome practical noise, several approaches have been proposed to protect resources from channel noise and to distill useful resources from noisy copies. For example, entanglement distillation protocol aims to obtain maximally entangled states from $N$ copies of non-maximally entangled states using local operations and classical communications (LOCC) only \cite{Bennett1996}. Determining conversion rates between resource states has recently been studied as a central problem in QRTs \cite{Winter2016,Brandao2015,Regula2018,Fang2018,Lami2019,Liu2019}.

A quantum resource may be quantified by a proper resource measure $ R $, whose properties lead to some fundamental rules in manipulating quantum resources. One of the important properties is the monotonicity which implies that resources are nonincreasing under free operations. In this work, along with the monotonicity, we focus on the so-called \emph{tensorization property}, namely $ R \left( \rho \otimes \sigma \right) = \max \left\{ R(\rho), R(\sigma) \right\} $. Tensorization property naturally arises in several studies on quantifying measures, not restricted to the quantum resource theory. For instance, it has served as a key tool to find several applications in classical information theory, e.g., non-interactive distribution simulation problem \cite{Kamath2012}, distributed channel coding problem \cite{Wang2011}, and also in quantum information theory, e.g., the distillation of nonlocal correlation by wiring \cite{Beigi2015}. By incorporating the monotonicity and the tensorization property, we explore fundamental limitations in manipulating quantum resources.

First, we show the limitation on the concentration of quantum resources. While one typically aims at maximal resource state in the distillation protocol, we are here interested in a scenario where one aims to produce more resourceful output by consuming noisy copies, referred to as \emph{resource concentration}. This is meaningful as one can obtain quantum advantages in quantum informational tasks not necessarily employing the maximally resourceful states. Furthermore, in some resource theories like those on nonclassicality and non-Gaussianity, maximally resourceful states are not well defined due to the infinite dimension of quantum systems. Further, while the distillation rate is a central problem in the resource distillation, we rather focus on the possibility of the resource concentration. The key idea of our investigation is the tensorization property of resources, which naturally leads to the limitation on resource concentration. The limitation states that a number of noisy copies cannot be concentrated into a better single copy using free operations under a resource measure satisfying both monotonicity and tensorization properties. The extent of limitation on resource concentration may depend on whether the monotonicity holds for deterministic or probabilistic free operations.

Next, we consider a measure that quantifies the resource of channel output to study the behavior of quantum resources under channel noise. We investigate the tensorization property of channel output resource to show the limitation on the resource preservability. If the channel-output resource measure satisfies the tensorization property, we find that employing correlated input state is not advantageous in protecting quantum resources from noise. We prove that the nonclassicality depth satisfies such a property and find a strong evidence for the tensorization property of maximal coherence, which is an important coherence measure in the problem of coherence distillation under strictly incoherent operations \cite{Lami2019}.

This paper is organized as follows. 
In Sec. \ref{sec:QRT}, we briefly introduce a general framework of QRTs and address the properties of quantum resource measures. In Sec. \ref{sec:concentration}, we show the limitation on resource concentration under the tensorization property and the monotonicity. As examples, we present several resource measures where our theorem applies, and especially we investigate two measures of nonclassicality, i.e. the nonclassicality depth \cite{Lee1991,Sabapathy2016} and the metrological power of nonclassicality \cite{Yadin2018,Kwon2019,Ge2020}. In Sec. \ref{sec:preservability}, we investigate the tensorization property of channel output resources and apply it to the nonclassicality depth and the maximal coherence. We finally conclude with some remarks in Sec. \ref{sec:discussion}.

\section{\label{sec:QRT}Quantum resource theory}

Let us begin by briefly introducing the basic formalism of QRTs. A resource theory is individually determined by the restrictions imposed on freely available operations, which defines the set of free operations $ \mathcal{O}$. The states that can be accessible using the free operations in $ \mathcal{O}$ are considered as free states, forming a convex set $ \mathcal{F} $. All the states other than free states are considered as resource states. The golden rule of QRTs is that free operations never create resources. That is, if $ \rho \in \mathcal{F} $ and $ \Phi \in \mathcal{O} $, then $ \Phi(\rho) \in \mathcal{F} $. One might consider the maximal set of free operations $ \tilde{\mathcal{O}} $, the collection of all operations satisfying this golden rule. On the other hand, we may consider a subset $ \mathcal{O} $ that does not necessarily contain all possible free operations, but includes only those free operations with operational motivation. 

Let us take the resource theory of entanglement as an example. Here entangled states are considered as resources, while $ \mathcal{F} $ consists of separable states. The maximal set $ \tilde{\mathcal{O}} $ must be constructed by all non-entangling operations \cite{Harrow2003,Brandao2008,Chitambar2020}, while LOCC is a physically motivated subset $ \mathcal{O} $ of non-entangling operations widely used in the study of entanglement manipulation. Choosing a different set of free operations exhibits a different property of quantum resource. For example, manipulation of entanglement is irreversible under LOCC, while it is reversible under all non-entangling operations \cite{Brandao2008,Berta2022, footnote1}.
 
%
Resource can be quantified by a proper resource measure. In the following, we list the desired properties of a resource measure $ R $.
	\begin{itemize}
	\item[(\textit{C}1a)] Vanishing for free states: $ R(\rho) \geqslant 0 $ with equality if $ \rho \in \mathcal{F} $.
	\item[(\textit{C}1b)] Faithfulness: $ R(\rho) = 0 $ if and only if $ \rho \in \mathcal{F} $.
	\item[(\textit{C}2a)] Monotonicity: $ R $ does not increase under trace-preserving free operations, i.e., $ R(\rho) \geqslant R\left( \Phi(\rho) \right) $.
	\item[(\textit{C}2b)] Strong monotonicity: $ R $ does not increase on average under conditional free operations, i.e., $ R(\rho) \geqslant \sum_i p_i R(\sigma_i) $, where $ p_i = \tr \left[ \Phi_i (\rho) \right] $ and $ \sigma_i = \Phi_i (\rho) / p_i $.
	\item[(\textit{C}2c)] Monotonicity under postselection of free operations: $ R $ does not increase even under trace non-preserving free operations, i.e., $ R(\rho) \geqslant R\left( \Phi_i(\rho) / p_i \right) $.
	\item[(\textit{C}3)] Convexity: $ R (\sum_i p_i \rho_i) \leqslant \sum_i p_i R(\rho_i) $.
	\end{itemize}

The property (\textit{C}1a) is essential in order to avoid fake detection of resources. Monotonicity (\textit{C}2a) is a fundamental property of resource measure implying that free operations cannot create or increase resource. Strong monotonicity (\textit{C}2b) is a stronger requirement when a map can be decomposed into sum of free operations as $ \Phi (\cdot) = \sum_i \Phi_i (\cdot) $ and one has access to the outcome $ i $ individually. It is straightforward to show that when (\textit{C}2b) is combined with convexity (\textit{C}3), (\textit{C}2a) is automatically satisfied. The property (\textit{C}2c) is an even stronger requirement implying that resource measure is nonincreasing under each probabilistic free operation. We primarily require the two properties (\textit{C}1a) and (\textit{C}2a) to be satisfied. Then we may further require a stricter restriction, (\textit{C}2b) or (\textit{C}2c), depending on the situation as we shall address below.

We also introduce an important property considered to derive our main result.
	\begin{itemize}
	\item[(\textit{C}4)] Tensorization property: $ R \left( \rho \otimes \sigma \right) = \max \{ R(\rho), R(\sigma) \} $
	\end{itemize}
This property is quite different from the other frequently considered property, i.e. additivity $ R \left( \rho \otimes \sigma \right) = R(\rho) + R(\sigma) $, which is usually obeyed in many resource measures. While additive measures quantify the total amount of resources, measures satisfying the tensorization property tends to estimate the strongest ability in certain tasks. For example, the metrological power of nonclassicality \cite{Yadin2018,Kwon2019,Ge2020} estimates the optimal performance of sensing displacement in a single direction among all possible directions involving multi-mode fields in phase space.

\section{\label{sec:concentration}No-go theorem for resource concentration}

The goal of resource concentration is to obtain an output state with a higher degree of resource from noisy resources. Suppose one has noisy resource states $ \rho_1, \rho_2, \cdots, \rho_N $, which hardly provides advantage in a quantum information task that exploits the resource quantified by $ R $. If one is able to concentrate resources into an output copy $ \sigma $ such that $ R(\sigma) > R(\rho_j) $ for any $ j $, the output state can perform the task better than any of $ \rho_j $.

In the resource concentration, one may aim at producing a single output state $ \sigma = \sum_i p_i \sigma_i = \sum_i \Phi_i ( \otimes_{j=1}^N \rho_j ) $ using free operations $ \Phi (\cdot) = \sum_i \Phi_i (\cdot) $. Let us denote the dimensions of input states $ \rho_j $ and output state $ \sigma$ as $ d_1, d_2, \cdots, d_N $ and $ d_\textrm{out} $, respectively. When we consider resources in infinite dimensions, we set all $ d_i $'s and $ d_\textrm{out} $ to be infinite. On the other hand, when we deal with resources in finite dimensions, the dimensions are not necessarily the same as long as free operations and the resource measure are well defined for different dimensions. The map $ \Phi $ transforms a $N$-partite state into a single-party state. This procedure may include measuring or discarding subsystems other than a desired output subsystem, which is allowed in free operations.

One can say that concentration is successful if
	\begin{equation} \label{eq:concentration}
	R (\sigma) > \max_j R (\rho_j) .
	\end{equation}
If a resource measure $ R $ satisfies both monotonicity (\textit{C}2a) and tensorization property (\textit{C}4), one can readily show that achieving Eq. (\ref{eq:concentration}) is impossible. Furthermore, if $ R $ also possesses a stronger monotonicity, (\textit{C}2b) or (\textit{C}2c), we find a stricter no-go theorem for resource concentration.
	\begin{observation} \label{obs:concentration}
	(No-go theorem for resource concentration)
	\begin{enumerate}[label=(\alph*)]
	\item If $ R $ satisfies (\textit{C}2a) and (\textit{C}4), then resource cannot be concentrated by the deterministic free operation, i.e.,
		\begin{equation}
		R (\sigma) \leqslant \max_j R (\rho_j) .
		\end{equation}
	\item If $ R $ satisfies (\textit{C}2b) and (\textit{C}4), then resource cannot be concentrated on average, i.e.,
		\begin{equation}
		\sum_i p_i R (\sigma_i) \leqslant \max_j R (\rho_j) .
		\end{equation}
	\item If $ R $ satisfies (\textit{C}2c) and (\textit{C}4), then resource cannot be concentrated by a probabilistic free operation, i.e.,
		\begin{equation}
		R (\sigma_i) \leqslant \max_j R (\rho_j) , ~~~ \forall i .
		\end{equation}
	\end{enumerate}
	\end{observation}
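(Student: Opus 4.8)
The plan is to reduce all three parts to one elementary fact: the tensorized input state $\rho_1 \otimes \cdots \otimes \rho_N$ already has resource value exactly $\max_j R(\rho_j)$, after which each flavour of monotonicity does the rest. First I would establish the $N$-copy version of the tensorization property (\textit{C}4) by induction on $N$. The base case $N=2$ is (\textit{C}4) itself. For the inductive step, write $\otimes_{j=1}^N \rho_j = \rho_1 \otimes \left( \otimes_{j=2}^N \rho_j \right)$, apply (\textit{C}4) once to obtain $R\left( \otimes_{j=1}^N \rho_j \right) = \max\left\{ R(\rho_1), R\left( \otimes_{j=2}^N \rho_j \right) \right\}$, then invoke the induction hypothesis on the second factor and use associativity of $\max$ to conclude $R\left( \otimes_{j=1}^N \rho_j \right) = \max_{1 \leqslant j \leqslant N} R(\rho_j)$.

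With this in hand, each claim follows by invoking the matching monotonicity axiom with input state $\otimes_{j=1}^N \rho_j$. For \textbf{(a)}, since $\sigma = \Phi\left( \otimes_{j=1}^N \rho_j \right)$ with $\Phi$ trace-preserving and free, (\textit{C}2a) gives $R(\sigma) \leqslant R\left( \otimes_{j=1}^N \rho_j \right) = \max_j R(\rho_j)$. For \textbf{(b)}, writing $\sigma = \sum_i p_i \sigma_i$ with $p_i \sigma_i = \Phi_i\left( \otimes_{j=1}^N \rho_j \right)$ and $\sum_i \Phi_i$ free, (\textit{C}2b) gives $\sum_i p_i R(\sigma_i) \leqslant R\left( \otimes_{j=1}^N \rho_j \right) = \max_j R(\rho_j)$. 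For \textbf{(c)}, applying (\textit{C}2c) to each branch individually gives $R(\sigma_i) \leqslant R\left( \otimes_{j=1}^N \rho_j \right) = \max_j R(\rho_j)$ for every $i$. I would also remark in passing that (b) combined with convexity (\textit{C}3) recovers the deterministic bound of (a), and that (c) implies (b); but since the axioms are stated separately it is cleanest to present each case on its own footing.

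There is essentially no analytic obstacle here — the entire content is the realization that tensorization caps the resource of the joint input at that of the single best copy, so that concentration is forbidden purely by combining (\textit{C}4) with monotonicity, independently of how noisy or far from free the $\rho_j$ are. The only points needing care are the bookkeeping ones: (i) the induction extending the two-factor statement of (\textit{C}4) to $N$ factors, which presumes $R$ is defined consistently across systems of different sizes and is insensitive to the ordering of tensor factors; and (ii) matching the correct notion of monotonicity to the correct notion of ``concentration'' — deterministic output in (a), the outcome-averaged resource in (b), and each postselected branch in (c) — so that the inequality points the claimed way in each case.
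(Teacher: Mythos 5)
Your proposal is correct and follows exactly the argument the paper intends when it says the result ``can readily be shown'': the tensorization property (\textit{C}4), iterated to $N$ factors, caps $R\left(\otimes_{j=1}^N \rho_j\right)$ at $\max_j R(\rho_j)$, and the appropriate monotonicity axiom then yields each of (a), (b), (c). The only addition beyond the paper's implicit reasoning is your explicit induction extending (\textit{C}4) from two factors to $N$, which is a harmless and correct piece of bookkeeping.
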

We here remark that the above no-go theorems hold regardless of the number of input copies $ n $. For example, when one tries to obtain a target state with resource $ R(\sigma_\textrm{T}) $ by a probabilistic concentration, Observation \ref{obs:concentration}(b) implies that the success probability is bounded by $ P_\textrm{succ} \leqslant \max_j R (\rho_j) / R(\sigma_\textrm{T}) $. In this case, increasing the number of input states does not increase the success probability. In the following, we introduce resource measures to which Observation \ref{obs:concentration} applies and discuss their physical meanings in quantum information tasks.

\subsection{\label{sec:ND}Nonclassicality depth}

In continuous-variable systems, an $n$-mode state $ \rho $ is called classical if it can be written as a convex mixture of coherent states, that is,
	\begin{equation} \label{eq:classical}
	\rho = \int d^{2n}\bm{\alpha} P(\bm{\alpha}) \ket{\bm{\alpha}}\bra{\bm{\alpha}} , ~~~ P(\bm{\alpha}) \geqslant 0 .
	\end{equation}
Here $ \bm{\alpha} = ( \textrm{Re}[\alpha_1], \textrm{Im}[\alpha_1], \cdots, \textrm{Re}[\alpha_n], \textrm{Im}[\alpha_n] )^T $ is a $2n$-dimensional vector in phase space and $ \ket{\bm{\alpha}} = \hat{D}(\bm{\alpha}) \ket{0} $ represents $n$-mode coherent state, where $ \hat{D}(\bm{\alpha}) = \bigotimes_{i=1}^n \exp \left( \alpha_i \hat{a}^\dagger_i + \alpha_i^\ast \hat{a}_i \right) $ is the multimode displacement operator and $ \hat{a}_i $($ \hat{a}^\dagger_i $) the annihilation (creation) operator associated with $i$th mode.  In the resource theory of nonclassicality, the set of free states includes all classical states (\ref{eq:classical}) having positive Glauber-Sudarshan {\it P} functions \cite{Tan2019,Glauber1963,Sudarshan1963}.

We adopt the following operations as free operations. (1) passive linear unitaries and displacements, (2) addition of classical ancilla modes, (3) classical measurement (projection onto coherent states), (4) classical mixing. Passive linear unitaries are photon-number-conserving unitary operations implemented with beam splitters and phase shifters. Any classical measurements can be realized by projection onto coherent states followed by coarse-graining of measurement outcomes. Our set of free operations does not cover all classicality-preserving operations \cite{Gehrke2012}, however, it covers physically motivated operations that are deemed easy to implement. Note that slightly different choices of free operations are considered in the resource theory of nonclassicality \cite{Yadin2018, Kwon2019, Ge2020} as well as in the Gaussian work extraction problem \cite{Singh2019}.

The nonclassicality depth was first introduced by C. T. Lee for a single mode \cite{Lee1991}, generalized also to multimode cases \cite{Sabapathy2016}. An $n$-mode state $ \rho $ can be represented by the so-called $s$-parametrized quasiprobability function \cite{Barnett2003}
	\begin{equation} \label{eq:spara}
	W_\rho (\bm{\alpha}; s) = \left( \frac{2}{\pi(1-s)} \right)^n \int d^{2n}\bm{\beta} P_\rho (\bm{\beta}) e^{-\frac{2}{1-s} \left| \bm{\beta}-\bm{\alpha} \right|^2 } ,
	\end{equation}
where $ P_\rho (\bm{\beta}) $ is the Glauber-Sudarshan {\it P} function of the state $ \rho $. When $ s=-1 $, $ W_\rho (\bm{\alpha}; s) $ corresponds to the Husimi Q function, which is always non-negative. Substituting $ s = 1-2\tau $, we may rewrite Eq. (\ref{eq:spara}) as
	\begin{equation}
	W_\rho (\bm{\alpha}; \tau) = \left( \frac{1}{\pi\tau} \right)^n \int d^{2n}\bm{\beta} P_\rho (\bm{\beta}) e^{-\frac{1}{\tau} \left| \bm{\beta}-\bm{\alpha} \right|^2 } ,
	\end{equation}
Then the nonclassicality depth is defined by
	\begin{equation}
	\tau_m (\rho) = \inf \left\{ \tau | W_\rho (\bm{\alpha}; \tau) \textrm{ is positive} \right\} .
	\end{equation}
Operationally, $ \tau_m $ can be interpreted as the minimum amount of additive thermal noise which makes $ \rho $ classical.

Despite its early origin, the properties of nonclassicality depth has not been thoroughly studied yet. We here show that the nonclassicality depth satisfies several desired properties listed in Sec. \ref{sec:QRT} (See Appendix \ref{sec:NDproperty} for proof).
	\begin{proposition}
	The nonclassicality depth $ \tau_m $ satisfies the properties (\textit{C}1a,b) faithfulness, (\textit{C}2a-c) monotonicity under both deterministic and conditional free operations, and (\textit{C}4) tensorization property.
	\end{proposition}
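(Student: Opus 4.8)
My plan is to derive all of the listed properties from one structural fact about the $\tau$-parametrized quasiprobability: $W_\rho(\bm{\alpha};\tau)$ is the Glauber--Sudarshan $P$ function convolved with the normalized isotropic Gaussian $G_\tau(\bm{x})=(\pi\tau)^{-n}e^{-|\bm{x}|^2/\tau}$, these Gaussians compose as $G_{\tau_1}\ast G_{\tau_2}=G_{\tau_1+\tau_2}$, the map $\rho\mapsto W_\rho(\cdot;\tau)$ is linear with $\int W_\rho(\cdot;\tau)=\tr\rho$, for every $\tau>0$ it is a genuine smooth function, and $W_\rho(\cdot;1)$ is the Husimi $Q$ function so that $\tau_m\le 1$ always. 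From $W_\rho(\cdot;\tau)=G_{\tau-\tau_0}\ast W_\rho(\cdot;\tau_0)$, the set $\{\tau:W_\rho(\cdot;\tau)\ge 0\}$ is an up-set whose infimum, attained by continuity in $\tau$, is $\tau_m(\rho)$. Faithfulness (\textit{C}1a,b) is then immediate: $\tau_m\ge 0$ trivially, $P_\rho\ge 0$ gives $\tau_m(\rho)=0$, and conversely $\tau_m(\rho)=0$ forces $W_\rho(\cdot;\tau)\ge 0$ for all $\tau>0$, so the distributional limit $P_\rho=\lim_{\tau\to 0^+}W_\rho(\cdot;\tau)$ is a non-negative measure, i.e.\ $\rho$ is classical.

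For tensorization (\textit{C}4) I would first note that the product form of the $P$ function and the factorization of the multimode Gaussian kernel over the two mode sets give $W_{\rho\otimes\sigma}(\bm{\alpha},\bm{\beta};\tau)=W_\rho(\bm{\alpha};\tau)\,W_\sigma(\bm{\beta};\tau)$. If $\tau\ge\max\{\tau_m(\rho),\tau_m(\sigma)\}$ both factors are non-negative, giving $\tau_m(\rho\otimes\sigma)\le\max\{\tau_m(\rho),\tau_m(\sigma)\}$. For the converse I would take, assuming $\rho$ nonclassical, any $\tau$ with $0<\tau<\tau_m(\rho)$: then $W_\rho(\cdot;\tau)$ is a genuine continuous function that is strictly negative at some $\bm{\alpha}_0$, while $\int W_\sigma(\cdot;\tau)=\tr\sigma=1>0$ forces $W_\sigma(\bm{\beta}_0;\tau)>0$ somewhere, so $W_{\rho\otimes\sigma}(\bm{\alpha}_0,\bm{\beta}_0;\tau)<0$ and $\tau<\tau_m(\rho\otimes\sigma)$; taking the supremum over such $\tau$, together with the symmetric statement in $\sigma$ (the classical case being trivial), yields $\tau_m(\rho\otimes\sigma)\ge\max\{\tau_m(\rho),\tau_m(\sigma)\}$ and hence equality.

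For monotonicity I would bound $\tau_m$ under each elementary free operation, a general conditional free operation being a composition of (1)--(4) whose probabilistic branches all originate from the coherent-state measurement. (1) Passive linear unitaries and displacements induce a norm-preserving orthogonal map plus a translation on phase space, under which $G_\tau$ is invariant, so $W$ is merely relabeled and $\tau_m$ is unchanged. (2) Appending a classical ancilla leaves $\tau_m$ unchanged by the tensorization identity above. (3) For a coherent-state projection of a subsystem $B$ of $\rho_{AB}$ onto $\ket{\mu}_B$, the unnormalized conditional $P$ function is $\tilde P_{A|\mu}(\bm{\alpha})\propto\int P_{AB}(\bm{\alpha},\bm{\beta})\,e^{-|\mu-\bm{\beta}|^2}\,d\bm{\beta}$; splitting the overlap as $e^{-|\mu-\bm{\beta}|^2}\propto(G_\tau\ast G_{1-\tau})(\mu-\bm{\beta})$, valid for $\tau\le 1$, and regrouping the convolutions via $G_{\tau_1}\ast G_{\tau_2}=G_{\tau_1+\tau_2}$ yields $W_{\rho_{A|\mu}}(\bm{\alpha};\tau)\propto\int G_{1-\tau}(\mu-\bm{\beta}'')\,W_{\rho_{AB}}(\bm{\alpha},\bm{\beta}'';\tau)\,d\bm{\beta}''$, which is non-negative whenever $W_{\rho_{AB}}(\cdot,\cdot;\tau)\ge 0$; since $\tau_m(\rho_{AB})\le 1$ this gives $\tau_m(\rho_{A|\mu})\le\tau_m(\rho_{AB})$, i.e.\ (\textit{C}2c) for this step. (4) For classical mixing and for coarse-graining of outcomes, linearity gives $W_{\sum_k q_k\rho_k}(\cdot;\tau)=\sum_k q_k W_{\rho_k}(\cdot;\tau)\ge 0$ whenever $\tau\ge\max_k\tau_m(\rho_k)$, so $\tau_m$ is quasiconvex: $\tau_m\bigl(\sum_k q_k\rho_k\bigr)\le\max_k\tau_m(\rho_k)$. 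Combining, every conditional branch $\sigma_i$ is a classical mixture of conditional states built from $\rho$, so (3) and (4) give $\tau_m(\sigma_i)\le\tau_m(\rho)$, which is (\textit{C}2c); averaging yields (\textit{C}2b); and applying quasiconvexity to $\sigma=\sum_i p_i\sigma_i$ yields (\textit{C}2a).

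I expect the only step needing an actual idea rather than bookkeeping to be (3): the point is to ``borrow'' precisely $\tau$ units of Gaussian width out of the coherent-state overlap $e^{-|\mu-\bm{\beta}|^2}$, which re-expresses the conditional $\tau$-quasiprobability as a non-negative Gaussian average of the joint $\tau$-quasiprobability $W_{\rho_{AB}}(\cdot,\cdot;\tau)$---and this is exactly where $\tau\le 1$, always available since $\tau_m\le 1$, is needed. The remaining subtlety is that $P$ functions may be tempered distributions rather than ordinary functions, so all positivity claims should be made for $\tau>0$ where $W_\rho(\cdot;\tau)$ is smooth, with the boundary value at $\tau=\tau_m$ recovered by continuity, and with faithfulness and the lower-bound direction of tensorization additionally using that a distributional limit of non-negative functions is a non-negative measure.
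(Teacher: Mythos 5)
Your proof is correct and follows essentially the same route as the paper's Appendix A: the product factorization $W_{\rho\otimes\sigma}=W_\rho W_\sigma$ for tensorization, invariance under passive unitaries/displacements, quasiconvexity under mixing, and---the key step---rewriting the post-measurement quasiprobability as a Gaussian average $\int G_{1-\tau}(\bm{\xi}'-\bm{\xi})\,W_\rho(\cdot,\bm{\xi}';\tau)\,d\bm{\xi}'$ of the pre-measurement one, which is exactly the paper's Eq.\ for $W_{\rho'}$. Your added care about $\tau\le 1$ and the distributional limit in the faithfulness direction are refinements of, not departures from, the paper's argument.
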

As the nonclassicality depth satisfies the monotonicity (\textit{C}2a-c) and tensorization property (\textit{C}4), Observation \ref{obs:concentration} implies the following no-go theorem.
	\begin{theorem}
	The nonclassicality depth $ \tau_m $ cannot be concentrated by classical operations even probabilistically.
	\end{theorem}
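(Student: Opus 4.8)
The plan is to read the theorem off as an immediate corollary of Observation~\ref{obs:concentration}(c), once the properties of $\tau_m$ collected in the preceding Proposition are available. Granting that $\tau_m$ obeys the strongest monotonicity (\textit{C}2c) --- non-increasing even under postselected, trace non-preserving classical operations --- and the tensorization property (\textit{C}4), the deduction is purely structural. Given noisy inputs $\rho_1,\dots,\rho_N$ and a classical operation $\Phi=\sum_i\Phi_i$ applied to $\bigotimes_{j=1}^N\rho_j$, with outcome branches $\sigma_i=\Phi_i\!\left(\bigotimes_{j=1}^N\rho_j\right)/p_i$, I would first iterate (\textit{C}4) to obtain $\tau_m\!\left(\bigotimes_{j=1}^N\rho_j\right)=\max_j\tau_m(\rho_j)$, and then apply (\textit{C}2c) to each branch to get $\tau_m(\sigma_i)\leqslant\tau_m\!\left(\bigotimes_{j=1}^N\rho_j\right)=\max_j\tau_m(\rho_j)$ for every $i$. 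This contradicts the concentration condition~(\ref{eq:concentration}) even in its most permissive probabilistic reading, which is exactly the assertion of the theorem. As the text already notes in the general case, the bound is independent of $N$, so enlarging the pool of noisy copies cannot help and there is no success-probability-versus-output-depth tradeoff to exploit.

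The genuine content therefore lies in the Proposition, and in particular in verifying (\textit{C}2c) and (\textit{C}4) for $\tau_m$; this is the step I expect to be the main obstacle, and the one deferred to Appendix~\ref{sec:NDproperty}. For (\textit{C}4) the key fact is that the $s$-parametrized quasiprobability of a product state factorizes, $W_{\rho\otimes\sigma}(\bm{\alpha}\oplus\bm{\beta};\tau)=W_\rho(\bm{\alpha};\tau)\,W_\sigma(\bm{\beta};\tau)$, combined with monotonicity of positivity in $\tau$ (adding more thermal noise never spoils classicality): if $\tau\geqslant\max\{\tau_m(\rho),\tau_m(\sigma)\}$ then both factors are nonnegative, hence so is the product, giving $\tau_m(\rho\otimes\sigma)\leqslant\max\{\tau_m(\rho),\tau_m(\sigma)\}$; the reverse inequality follows because tracing out a subsystem --- realizable here as a coherent-basis measurement followed by discarding the outcome --- is a free operation, so $\tau_m$ cannot rise in passing to a marginal. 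For (\textit{C}2c) I would check the four elementary classes of free operations one at a time: passive linear unitaries and displacements act as symplectic/affine reparametrizations of phase space and leave the infimal thermalization parameter unchanged; addition of a classical ancilla is subsumed by (\textit{C}4) since classical states have $\tau_m=0$; projection onto coherent states and classical mixing must be shown not to increase that parameter; and then one closes under composition and convex combination to reach all classical operations in the chosen set.

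One subtlety to flag in the write-up: Observation~\ref{obs:concentration}(c) controls a single branch $\sigma_i$, so ``even probabilistically'' in the theorem must be understood as the statement that no individual measurement outcome of a classical operation, however unlikely, can produce an output whose nonclassicality depth exceeds $\max_j\tau_m(\rho_j)$. I would spell this interpretation out right after the formal deduction so that the operational meaning of the no-go result is unambiguous, and note that it strengthens the deterministic and on-average statements of Observation~\ref{obs:concentration}(a,b) specialized to $\tau_m$.
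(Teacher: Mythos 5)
Your proposal is correct and follows the paper's own route exactly: the theorem is obtained by feeding the Proposition's properties (\textit{C}2c) and (\textit{C}4) of $\tau_m$ into Observation~\ref{obs:concentration}(c), with the substantive work deferred to the appendix-level verification of those properties, which you sketch in essentially the same way. The only minor deviation is that you derive the lower bound $\tau_m(\rho\otimes\sigma)\geqslant\max\{\tau_m(\rho),\tau_m(\sigma)\}$ from monotonicity under partial trace, whereas the paper argues directly that for $\tau<\tau^\ast$ the factorized quasiprobability $W_\rho(\bm{\alpha}_1;\tau)W_\sigma(\bm{\alpha}_2;\tau)$ must take a negative value; both are valid and yours is not circular since monotonicity under measurement and mixing is established independently of tensorization.
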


Let us discuss some practical examples. For Gaussian states, the nonclassicality depth is given by
	\begin{equation}
	\tau_m (\rho_\textrm{G}) = \max \left\{ 0, \tfrac{1}{2} - \lambda_\textrm{min}(\bm{V}) \right\} ,
	\end{equation}
where $ \lambda_\textrm{min}(\bm{V}) $ denotes the minimum eigenvalue of the covariance matrix $ \bm{V} $ of $ \rho_\textrm{G} $, or the minimum variance over all different quadratures. Note that the nonclassicality depth is only determined by $ \lambda_\textrm{min}(\bm{V}) $ representing the degree of squeezing. Our theorem reproduces the no-go theorem for Gaussian squeezing distillation developed in \cite{Yadin2018,Kraus2003,Lami2018}, while different measures for nonclassicality are employed there.

The nonclassicality depth becomes maximal for any pure non-Gaussian state, that is, $ \tau_m (\ket{\psi_\textrm{NG}}\bra{\psi_\textrm{NG}}) = 1 $ \cite{Lutkenhaus1995}. For a lossy single-photon state $ \rho_\textrm{loss} = (1-q)\ket{0}\bra{0} + q\ket{1}\bra{1} $, the nonclassicality depth becomes $ \tau_m (\rho_\textrm{loss}) = q $. Our no-go theorem not only shows that it is impossible to distill a pure nonclassical state from lossy single-photon states, but also that the single-photon fraction cannot be increased by classical operations. The latter result agrees with the one given in \cite{Berry2010}. We can generalize this result to Fock states $ \ket{n} $ under a loss channel with transmissivity $ t $. Because the nonclassicality depth decreases monotonically as $ t $ decreases, it is impossible to obtain a pure Fock state from lossy Fock states.

\subsection{\label{sec:MP}Metrological power of nonclassicality}

Recently, nonclassicality has been studied as a resource quantifying the quantum advantage in metrological tasks \cite{Yadin2018,Kwon2019,Ge2020}. The phase space for $n$-mode bosonic system is described by $2n$ quadratures $ \hat{\bm{R}} = ( \hat{x}_1, \hat{p}_1, \cdots, \hat{x}_n, \hat{p}_n )^T $. When the state is displaced along the direction $ \bm{\mu} = ( \textrm{Re}[\mu_1], \textrm{Im}[\mu_1], \cdots, \textrm{Re}[\mu_n], \textrm{Im}[\mu_n] )^T $, which is a real $2n$-dimensional unit vector, the variance of the estimator is lower bounded by the quantum Cram\'{e}r-Rao bound \cite{Kwon2019,Braunstein1994}
	\begin{equation}
	(\Delta \theta)^2 \geqslant \frac{1}{4\bm{\mu}^T \bm{F} \bm{\mu}} .
	\end{equation}
$ \bm{F} $ is the quantum Fischer information (QFI) matrix of which elements are given by
	\begin{equation}
	\bm{F}_{kl} = \frac{1}{2} \sum_{i,j} \frac{(\lambda_i-\lambda_j)^2}{\lambda_i+\lambda_j} \bra{i} \hat{\bm{R}}_k \ket{j} \bra{j} \hat{\bm{R}}_l \ket{i} ,
	\end{equation}
where $ \lambda_i $ and  $ \ket{i} $ are eigenvalues and eigenstates of $ \rho $, respectively. For a later convenience, we use QFI scaled by a factor $ \frac{1}{4} $ compared to the usual definition, as adopted in \cite{Yadin2018,Ge2020}. Then the metrological power of nonclassicality is defined as \cite{Yadin2018,Kwon2019,footnote2}
	\begin{equation}
	F_1 (\rho) = \max \left\{ \lambda_\textrm{max} (\bm{F}) - \tfrac{1}{2} , 0 \right\} ,
	\end{equation}
where $ \lambda_\textrm{max} (\bm{F}) $ is the maximum eigenvalue of $ \bm{F} $. Operationally, $ F_1 $ quantifies the maximal advantage in displacement estimation along a certain direction among all possible directions in phase space.

For pure states, the metrological power of nonclassicality satisfies (\textit{C}1b), that is, $ F_1 = 0 $ only for coherent states. However, for mixed states, it satisfies only (\textit{C}1a) but not (\textit{C}1b) in general. Nevertheless, $ F_1 $ is a useful quantifier of nonclassicality due to its operational interpretation as well as the monotonic property. The monotonicity (\textit{C}2a) is proved in \cite{Yadin2018} and \cite{Kwon2019}, respectively, with a slightly different set of free operations. In \cite{Yadin2018}, it is shown that $ F_1 $ satisfies (\textit{C}1a) under the set of free operations defined similarly to what we employed in the previous section. On the other hand, Ref. \cite{Yadin2018} employs a much broader set of measurements, that is all destructive measurements where measured systems are discarded. The tensorization property (\textit{C}4) is proved in \cite{Kwon2019} as well. Now we can apply the Observation \ref{obs:concentration}(a) to obtain the following theorem.
	\begin{theorem}
	The metrological power of nonclassicality $ F_1 $ cannot be concentrated under deterministic classical operations.
	\end{theorem}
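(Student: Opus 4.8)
The plan is to treat this as an immediate corollary of Observation~\ref{obs:concentration}(a): once we know that $F_1$ enjoys both monotonicity~(\textit{C}2a) under the classical operations fixed in Sec.~\ref{sec:ND} and the tensorization property~(\textit{C}4), no further work is required. So I would organize the argument into two parts: (i) record the two input properties of $F_1$, and (ii) feed them into the observation.

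For part~(i), the monotonicity~(\textit{C}2a) of $F_1$ under nonclassicality-free operations is established in \cite{Yadin2018,Kwon2019}; the only point needing care is that the free operations used in those works are at least as large as the set adopted here --- passive linear unitaries and displacements, appending classical ancillas, coherent-state projections followed by coarse-graining, and classical mixing --- so that each operation allowed here is covered and $F_1$ cannot increase under it. The tensorization property~(\textit{C}4) of $F_1$ is proved in \cite{Kwon2019}; the underlying reason is that for a product state the QFI matrix is block-diagonal, $\bm{F}_{\rho\otimes\sigma}=\bm{F}_\rho\oplus\bm{F}_\sigma$, since the cross terms between the two subsystems carry the weight $(\lambda_i-\lambda_j)^2/(\lambda_i+\lambda_j)$, which vanishes precisely on the index pairs that the tensor structure forces for those terms. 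Consequently $\lambda_\textrm{max}(\bm{F}_{\rho\otimes\sigma})=\max\{\lambda_\textrm{max}(\bm{F}_\rho),\lambda_\textrm{max}(\bm{F}_\sigma)\}$, and hence $F_1(\rho\otimes\sigma)=\max\{F_1(\rho),F_1(\sigma)\}$.

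For part~(ii), let $\rho_1,\dots,\rho_N$ be the noisy inputs and $\sigma=\Phi\!\left(\bigotimes_{j=1}^N\rho_j\right)$ for any deterministic (trace-preserving) classical operation $\Phi$. Monotonicity gives $F_1(\sigma)\leqslant F_1\!\left(\bigotimes_{j=1}^N\rho_j\right)$, and iterating the tensorization property gives $F_1\!\left(\bigotimes_{j=1}^N\rho_j\right)=\max_j F_1(\rho_j)$; together these are exactly the content of Observation~\ref{obs:concentration}(a) specialized to $R=F_1$. Therefore $F_1(\sigma)\leqslant\max_j F_1(\rho_j)$, so the concentration condition~(\ref{eq:concentration}) can never be met, which is the theorem.

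The derivation is short, and the genuine limitation is on the strength of the conclusion rather than on any step. Only (\textit{C}2a) --- and not the stronger monotonicities (\textit{C}2b) or (\textit{C}2c) --- is available for $F_1$, so only part~(a) of Observation~\ref{obs:concentration} applies; in contrast to the nonclassicality depth, no probabilistic no-go statement follows for $F_1$ unless one first establishes a stronger monotonicity. The main thing to check carefully, then, is the compatibility of the free-operation set of \cite{Yadin2018,Kwon2019} with the one used here, so that the cited monotonicity of $F_1$ genuinely covers every classical operation appearing in the concentration protocol.
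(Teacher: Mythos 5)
Your proposal matches the paper's argument exactly: the paper also establishes the theorem by citing the monotonicity (\textit{C}2a) of $F_1$ from \cite{Yadin2018,Kwon2019} and the tensorization property (\textit{C}4) from \cite{Kwon2019}, then invoking Observation~\ref{obs:concentration}(a), and it likewise flags the compatibility of the free-operation sets as the point requiring care. Your added sketch of why the QFI matrix is block-diagonal on product states is correct but goes beyond what the paper spells out; otherwise the two arguments coincide.
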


A probabilistic concentration of $ F_1 $ is possible, but the strong monotonicity (\textit{C}2b) of $ F_1 $ is not proved yet. Instead another form of monotonicity is proved in \cite{Yadin2018}, which states that $ p_i F_1 (\sigma_i) \leqslant F_1 (\otimes_{j=1}^N \rho_j) $ under classical operations and destructive measurements without feedforward. In this form, we only consider the output state of single measurement outcome without taking the average over all outcomes as done in (\textit{C}2b). Then, using the tensorization property (\textit{C}4), we formulate the limitation on probabilistic concentration as $ p_i F_1 (\sigma_i) \leqslant \max_j F_1 (\rho_j) $. When one tries to obtain the target state $ \sigma_\textrm{T} $ by probabilistic concentration, the success probability is bounded by the following theorem.
	\begin{theorem}
	In probabilistic concentration of $ F_1 $, the success probability is upper bounded as
		\begin{equation} \label{eq:pbound}
		P_\textrm{succ} \leqslant \frac{\max_j F_1 (\rho_j)}{F_1 (\sigma_\textrm{T})} .
		\end{equation}
	\end{theorem}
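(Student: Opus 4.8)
The plan is to obtain Eq.~(\ref{eq:pbound}) by combining two ingredients already at hand: the single-outcome monotonicity of $F_1$ established in \cite{Yadin2018}, namely $p_i F_1(\sigma_i) \leqslant F_1(\otimes_{j=1}^N \rho_j)$ for any outcome $i$ of a classical operation followed by a destructive measurement without feedforward, and the tensorization property (\textit{C}4) of $F_1$. The whole argument then amounts to rewriting the right-hand side as $\max_j F_1(\rho_j)$ and identifying the left-hand side with $P_\textrm{succ}\,F_1(\sigma_\textrm{T})$.

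First I would iterate (\textit{C}4): since $F_1(\rho\otimes\sigma)=\max\{F_1(\rho),F_1(\sigma)\}$, a straightforward induction on the number of copies gives $F_1(\rho_1\otimes\cdots\otimes\rho_N)=\max_{1\leqslant j\leqslant N}F_1(\rho_j)$. This step holds for arbitrarily many input copies, which is precisely why the eventual bound does not improve by feeding in more states.

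Next I would reduce the probabilistic concentration protocol to a two-outcome form. The protocol, viewed coarsely, has two events: \emph{success}, in which the output state is the target $\sigma_\textrm{T}$, and \emph{failure}. Merging all successful measurement outcomes into a single outcome is a coarse-graining of the measurement record, which is one of the classical (free) operations we admit; hence the protocol still consists of a classical operation together with a destructive measurement, and the coarse-grained successful outcome occurs with probability $P_\textrm{succ}$ and produces the state $\sigma_\textrm{T}$ (no convexity is needed, since every merged branch yields the same state $\sigma_\textrm{T}$ by the definition of success). Applying the single-outcome monotonicity to this outcome and then the iterated tensorization gives $P_\textrm{succ}\,F_1(\sigma_\textrm{T}) \leqslant F_1(\otimes_{j=1}^N \rho_j) = \max_j F_1(\rho_j)$, and dividing by $F_1(\sigma_\textrm{T})$ yields Eq.~(\ref{eq:pbound}).

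The main point requiring care is exactly this passage from many successful outcomes to one: the monotonicity borrowed from \cite{Yadin2018} constrains each individual outcome separately, so naively summing it over the successful outcomes would produce a spurious multiplicity factor. It is therefore essential that regrouping outcomes is itself a free operation (classical post-processing of the measurement outcomes), which legitimizes treating the entire success branch as a single outcome; once that is granted, the estimate is immediate. The argument is otherwise routine, and it parallels the remark following Observation~\ref{obs:concentration} on probabilistic concentration of tensorizing measures.
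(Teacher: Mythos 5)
Your proof is correct and follows essentially the same route as the paper: it combines the single-outcome monotonicity $p_i F_1(\sigma_i)\leqslant F_1(\otimes_{j=1}^N\rho_j)$ from \cite{Yadin2018} with the tensorization property (\textit{C}4) and identifies the success branch with $p_i=P_\textrm{succ}$, $\sigma_i=\sigma_\textrm{T}$. Your explicit justification that merging all successful outcomes into one is itself a free coarse-graining is a point the paper leaves implicit, but it does not change the argument.
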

As noted in Sec. \ref{sec:concentration}, the upper bound does not depend on the number of input copies.

Let us recall the nonclassicality concentration of cat states $ \ket{\psi_c(\alpha)} = \frac{1}{\sqrt{\mathcal{N}_\alpha}} ( \ket{\alpha} + \ket{-\alpha} ) $, where $ \mathcal{N}_\alpha $ is the normalization factor, discussed in \cite{Yadin2018}. One aims to obtain a larger cat state $ \ket{\psi_c(\sqrt{2}\alpha)} $ from a pair of cat states $ \ket{\psi_c(\alpha)}^{\otimes2} $ assuming $ \alpha \gg 1 $. Performing 50:50 beam splitter interaction followed by vacuum projection on one mode, one obtains $ \ket{\psi_c(\sqrt{2}\alpha)} $ at the other mode with probability $ 1/2 $. Since the metrological power of the cat state is given by $ F_1 (\ket{\psi_c(\alpha)}) = 2|\alpha|^2 $, this scheme saturates the inequality $ P_\textrm{succ} \leqslant F_1(\ket{\psi_c(\alpha)}) / F_1(\ket{\psi_c(\sqrt{2}\alpha)}) $. Interestingly, while a single pair of input states is considered in this scheme, increasing the number of input states will never increase the success probability. There still remains the possibility that the cat state is probabilistically amplified from a single copy, but such protocols have not been reported yet.

One might wonder if the no-go theorems for two different measures of nonclassicality contradict each other. The contradiction does not occur, however, because the nonclassicality depth and the metrological power look into different aspects of nonclassicality. For example, the nonclassicality depth of a cat state is always $ 1 $ regardless of the amplitude $ \alpha $ while the metrological power increases monotonically with $ \alpha $. Therefore, the probabilistic amplification of cat state does not violate the no-go theorem for concentration of the nonclassicality depth. It is also possible to concentrate the metrological power from a pair of lossy single-photon state $ \rho_\textrm{loss}^{\otimes 2} $. Performing beamsplitter interaction followed by vacuum projection, one obtains the output state $ \sigma_\textrm{out} = \frac{1}{P} \left( (1-q)^2\ket{0}\bra{0} + q(1-q)\ket{1}\bra{1} + \frac{q^2}{2}\ket{2}\bra{2} \right) $ with the probability $ P = 1-q+\frac{q^2}{2} $.
	\begin{figure}[t]
	\centering \includegraphics[clip=true, width=\columnwidth]{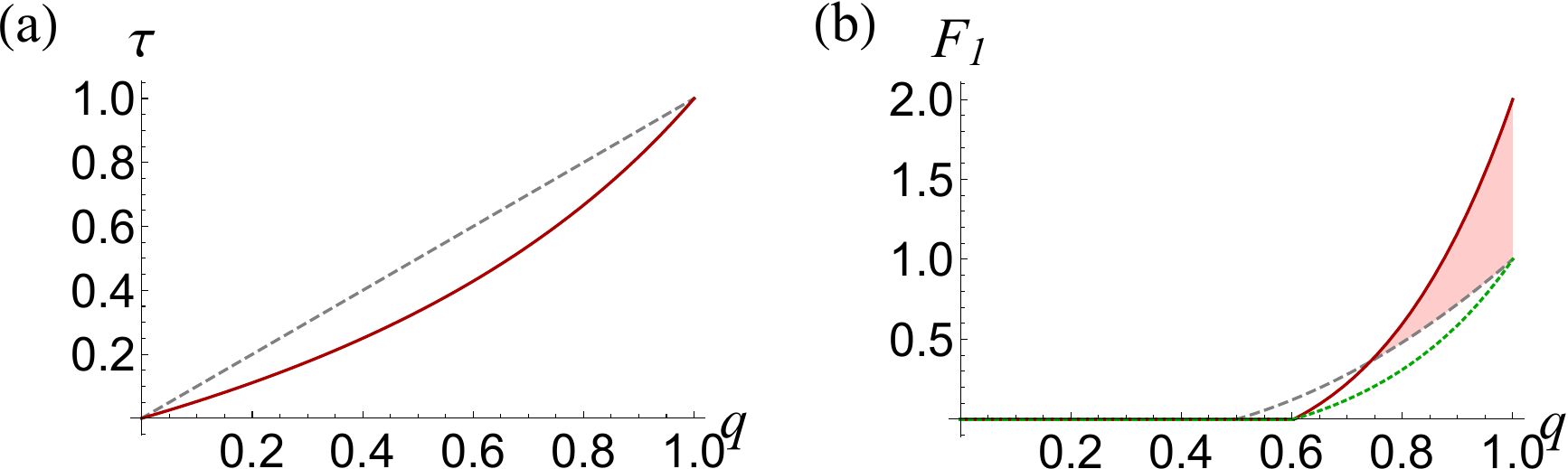}
	\caption{\label{fig:lossy} Plot illustrating (a) the nonclassicality depth and (b) the metrolgical power. Dashed curves correspond to input lossy single-photon state $ \rho_\textrm{loss} $. Red solid curved correspond to output state $ \sigma_\textrm{out} $ after vacuum projection on the other mode. Shaded region in (b) represents successful concentration. Green dotted curve represents $ P F_1 (\sigma_\textrm{out}) $.}
	\end{figure}
We show that one can attain the successful concentration $ F_1 (\sigma_\textrm{out}) > F_1 (\rho_\textrm{loss}) $ when $ q \gtrsim 0.7419 $ as shown in Fig. \ref{fig:lossy}(b), while the nonclassicality depth cannot be concentrated as shown in Fig. \ref{fig:lossy}(a). In Fig. \ref{fig:lossy}(b), by comparing $ F_1 (\sigma_\textrm{out}) $ and $ P F_1 (\rho_\textrm{loss}) $, it is observed that the bound (\ref{eq:pbound}) is strictly obeyed and that it is saturated only when $ q = 1 $.

It must also be noted that measurements considered as free operations are different, destructive measurements in this section and only classical measurements in Sec. \ref{sec:ND}. Consider that a two-mode squeezed vacuum is generated by mixing position- and momentum-squeezed states by 50:50 beamsplitter. One may try to concentrate nonclassicality into one mode by performing measurement on the other mode. Classical measurements cannot accomplish this task due to the no-go theorem for the nonclassicality depth. On the other hand, highly nonclassical Fock state can be attained by photon counting while the success probability is bounded by (\ref{eq:pbound}).

\subsection{Gaussian quantum resource theory}

The no-go theorem for Gaussian resource distillation was developed recently \cite{Lami2018} in a similar context to our work. In the Gaussian regime, a state is fully characterized by its displacement $ \bm{s} $ and covariance matrix $ \bm{V} $. In Gaussian QRT, the set of free states is defined as $ \mathcal{F}_\textrm{G} = \mathcal{F} \cap \mathcal{G} $, where $ \mathcal{G} $ denotes the set of Gaussian states. For a given covariance matrix $ \bm{V} $, a resource measure is defined as
	\begin{equation}
	\kappa_\mathcal{F} (\bm{V}) = \min \left\{ t \geqslant 1 ~ | ~ t\bm{V} \in \mathcal{V}_\mathcal{F} \right\} ,
	\end{equation}
where $ \mathcal{V}_\mathcal{F} $ denotes the set of covariance matrices corresponding to free states. For example, $ \kappa_\mathcal{F} $ becomes a Gaussian entanglement measure when $ \mathcal{F} $ is the set of separable states, or a Gaussian nonclassicality measure when $ \mathcal{F} $ is the set of classical states. It was shown that $ \kappa_\mathcal{F} $ satisfies the monotonicity (\textit{C}2a) and the tensorization property (\textit{C}4). Interestingly, because any Gaussian measurements on Gaussian states can be described in a deterministic way \cite{Eisert2002,Fiurasek2002,Giedke2002}, (\textit{C}2a) also implies (\textit{C}2c). Therefore Observation \ref{obs:concentration}(c) holds for $ \kappa_\mathcal{F} $. This result reproduces the no-go theorem for Gaussian squeezing \cite{Kraus2003} and the no-go theorem for Gaussian entanglement distillation \cite{Eisert2002,Fiurasek2002,Giedke2002}.

\subsection{\label{sec:maxcoh}Maximal coherence}

In discrete-variable systems, QRT of coherence has been extensively investigated \cite{Streltsov2017}. If one considers the maximal set of incoherent operations $ \tilde{\mathcal{O}} $, so called maximally incoherent operations (MIO), conversion of coherence is reversible in the asymptotic limit \cite{Brandao2015,Berta2022}. However, if one considers a smaller set of incoherent operations, we can find the irreversible behavior of coherence. The set of strictly incoherent operations (SIO) is a widely employed set of free operations \cite{Yadin2016,Chitambar2016} due to their simple structure. An operation is called SIO if each Kraus operator $ K_i $ and its adjoint $ K_i^\dagger $ are both incoherent. Lami \textit{et al.} introduced a measure called the maximal coherence \cite{Lami2019}, defined as
	\begin{equation}
	\eta (\rho) = \max_{i \neq j} \frac{\left| \rho_{ij} \right|}{\sqrt{\rho_{ii}\rho_{jj}}} ,
	\end{equation}
which satisfies the monotonicity (\textit{C}2c) under SIO. Also it obeys the tensorization property (\textit{C}4) so we obtain Observation \ref{obs:concentration}(c).

Using the property of maximal coherence, Ref. \cite{Lami2019} derived important theorems on coherence distillation under SIO. In the coherence distillation, one aims to convert a number of copies $ \rho $ into an $m$-dimensional maximally coherent state $ \Psi_m $ under a set of free operations $ \mathcal{O} $. The fidelity of distillation characterizes the error in the distillation as
	\begin{equation}
	F_\mathcal{O} (\rho, m) = \sup_{\Lambda \in \mathcal{O}} F \left( \Lambda (\rho) , \Psi_m \right) ,
	\end{equation}
where $ F ( \rho, \sigma) = \left( \textrm{tr} \sqrt{\sqrt{\rho} \sigma \sqrt{\rho}} \right)^2 $. The distillable coherence is the maximal rate of obtaining maximally coherent qubit state with vanishing error, that is,
	\begin{equation}
	C_{d,\mathcal{O}} (\rho) = \sup \left\{ r \middle| \lim_{n\to\infty} F_\mathcal{O} \left( \rho^{\otimes n} , 2^{rn} \right) = 1 \right\} .
	\end{equation}
We here restate the theorems on SIO coherence distillation.
	\begin{theorem} \label{thm:cohdistill}
	\begin{enumerate}[label=(\alph*)]
	\item \cite[Theorem 1]{Lami2019} A state $ \rho $ is distillable under SIO, that is, $ C_{d,\textrm{SIO}} (\rho) > 0 $, if and only if $ \eta (\rho) = 1 $ in Eq. (15).
	\item \cite[Theorem 3]{Lami2019} The fidelity of asymptotic distillation is given by
		\begin{equation} \label{eq:fdistill}
		\lim_{n \to \infty} F_\textrm{SIO} ( \rho^{\otimes n}, 2 ) = \frac{1 + \eta(\rho)}{2} .
		\end{equation}
	\end{enumerate}
	\end{theorem}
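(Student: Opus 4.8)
\emph{Proof sketch.} The plan is to derive both parts from three facts about $ \eta $ that are already in hand: its monotonicity (\textit{C}2c) under SIO (in particular under postselection, and as the trivial one-outcome case also under trace-preserving SIO), its tensorization property (\textit{C}4) --- which gives $ \eta(\rho^{\otimes n})=\eta(\rho) $ --- together with $ \eta(\Psi_m)=1 $ and $ 0\leqslant\eta(\sigma)\leqslant1 $ for every state (the last from positivity of the $ 2\times2 $ principal submatrices). The only calculational input is the elementary estimate, valid for any state $ \sigma $,
	\begin{equation} \label{eq:etafid}
	F(\sigma,\Psi_m) = \langle\Psi_m|\sigma|\Psi_m\rangle \leqslant \tfrac{1}{m}+\eta(\sigma) , \qquad F(\sigma,\Psi_2) \leqslant \tfrac{1+\eta(\sigma)}{2} ,
	\end{equation}
both following from $ F(\sigma,\Psi_m)=\tfrac{1}{m}\sum_{k,l<m}\sigma_{kl} $, the bound $ |\sigma_{kl}|\leqslant\eta(\sigma)\sqrt{\sigma_{kk}\sigma_{ll}} $, and Cauchy--Schwarz $ \left(\sum_{k<m}\sqrt{\sigma_{kk}}\right)^2\leqslant m\sum_{k<m}\sigma_{kk}\leqslant m $ (and AM--GM in the $ m=2 $ case).

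The converse halves are then immediate. For the ``only if'' direction of (a): if $ C_{d,\textrm{SIO}}(\rho)>0 $ there are SIO maps $ \Lambda_n $ and a rate $ r>0 $ with $ F(\Lambda_n(\rho^{\otimes n}),\Psi_{2^{rn}})\to1 $, and combining (\ref{eq:etafid}) with (\textit{C}2c) and (\textit{C}4) gives $ \eta(\rho)=\eta(\rho^{\otimes n})\geqslant\eta(\Lambda_n(\rho^{\otimes n}))\geqslant F(\Lambda_n(\rho^{\otimes n}),\Psi_{2^{rn}})-2^{-rn}\to1 $, hence $ \eta(\rho)=1 $. For the upper bound in (b): for any SIO $ \Lambda $, $ F(\Lambda(\rho^{\otimes n}),\Psi_2)\leqslant\tfrac12\left(1+\eta(\Lambda(\rho^{\otimes n}))\right)\leqslant\tfrac12\left(1+\eta(\rho)\right) $, so $ \limsup_n F_{\textrm{SIO}}(\rho^{\otimes n},2)\leqslant\tfrac12(1+\eta(\rho)) $.

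For the achievability in (b), assume $ \eta(\rho)=\eta>0 $ (the case $ \eta(\rho)=0 $, i.e.\ $ \rho $ incoherent, is trivial since both sides equal $ \tfrac12 $), and fix a pair $ (i,j) $ attaining the maximum in the definition of $ \eta $ (attained, the system being finite-dimensional), so $ \rho_{ii},\rho_{jj}>0 $. I would run the following SIO on $ \rho^{\otimes n} $: (i) on each copy perform the incoherent projective instrument $ \{P_{ij},I-P_{ij}\} $ with $ P_{ij}=|i\rangle\langle i|+|j\rangle\langle j| $ --- on the outcome $ P_{ij} $ (probability $ p=\rho_{ii}+\rho_{jj} $) the copy collapses to the qubit $ \tilde\sigma $ with diagonal $ (a,1-a) $, $ a=\rho_{ii}/p $, and off-diagonal of modulus $ \eta\sqrt{a(1-a)} $ (the latter because $ (i,j) $ is a maximizing pair); (ii) apply the diagonal --- hence strictly incoherent --- unitary making the off-diagonal real and positive, then the subnormalized incoherent Kraus operator $ K=\mathrm{diag}(\sqrt{1-a},\sqrt a) $, completed to an instrument by $ K'=\mathrm{diag}(\sqrt a,\sqrt{1-a}) $ --- conditioned on the $ K $ branch (probability $ 2a(1-a) $) the qubit becomes the rebalanced state $ \tau^\ast $ with diagonal $ (\tfrac12,\tfrac12) $ and off-diagonal $ \tfrac{\eta}{2} $, so that $ F(\tau^\ast,\Psi_2)=\tfrac12(1+\eta) $; (iii) each copy independently reaches $ \tau^\ast $ with probability $ p^\ast=2pa(1-a)>0 $, so output (after relabelling $ \{|i\rangle,|j\rangle\}\to\{|0\rangle,|1\rangle\} $) one such qubit whenever at least one exists, and a fixed incoherent state otherwise. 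The whole protocol is an SIO, the no-success probability is $ (1-p^\ast)^n\to0 $, and the output $ \sigma_n $ obeys $ F(\sigma_n,\Psi_2)=(1-(1-p^\ast)^n)\tfrac12(1+\eta)+(1-p^\ast)^n\tfrac12\to\tfrac12(1+\eta) $, which together with the converse gives (b).

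Finally, the ``if'' direction of (a): when $ \eta(\rho)=1 $ the projected qubit $ \tilde\sigma $ of step (i) has vanishing determinant, hence is a pure coherent state $ |v\rangle\propto\sqrt{\rho_{ii}}\,|i\rangle+e^{i\arg\rho_{ij}}\sqrt{\rho_{jj}}\,|j\rangle $ with relative entropy of coherence $ h(\rho_{ii}/p)>0 $, $ h $ the binary entropy; projecting each of the $ n $ copies yields, with probability $ \to1 $, at least $ (p-\delta)n $ exact copies of $ |v\rangle $, and since pure-state coherence is SIO-distillable at a rate equal to its relative entropy of coherence --- concretely via the incoherent Hamming-weight/Schmidt projection, which collapses $ |v\rangle^{\otimes k} $ onto a fixed-weight incoherent subspace and outputs a genuine $ \Psi_{\binom{k}{w}} $ --- one obtains $ C_{d,\textrm{SIO}}(\rho)\geqslant p\,h(\rho_{ii}/p)>0 $. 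The main obstacle throughout is the rebalancing step in the achievability of (b): one must notice that a \emph{single} incoherent Kraus operator moves the generically unbalanced, mixed projected qubit onto the symmetric diagonal without decreasing $ \eta $, and that the attendant probabilistic loss is asymptotically free because only one success among $ \Theta(n) $ independent trials is needed; verifying that every measurement-and-feedforward step stays strictly incoherent, and that the exponentially rare failures do not spoil the limits, is the remaining book-keeping.
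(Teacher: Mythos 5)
The paper does not actually prove this theorem: it is imported verbatim from Lami, Regula, and Adesso \cite[Theorems 1 and 3]{Lami2019}, so there is no internal proof to compare against. Your sketch is, as far as I can check, a correct self-contained derivation, and it follows essentially the same mechanism as the cited reference. The converse halves rest on the elementary bound $F(\sigma,\Psi_m)\leqslant \tfrac{1}{m}+\eta(\sigma)$ (and $F(\sigma,\Psi_2)\leqslant\tfrac{1+\eta(\sigma)}{2}$) combined with tensorization and monotonicity of $\eta$; your Cauchy--Schwarz estimate is right, and it is worth noting that the same Cauchy--Schwarz argument applied to the Kraus decomposition $K_a=\sum_j c_{aj}\ket{\pi_a(j)}\bra{j}$ also yields monotonicity of $\eta$ under the \emph{full} trace-preserving SIO map, not merely branchwise --- which is what the upper bound in (b) actually uses, since $F_{\textrm{SIO}}$ is a supremum over deterministic maps. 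The achievability halves rest on extracting the optimal $2\times2$ principal submatrix with the incoherent projector $P_{ij}$ and rebalancing it with a diagonal Kraus operator so that the output has fidelity exactly $\tfrac{1+\eta}{2}$ with $\Psi_2$; this is precisely the filtering step in Lami \emph{et al.}, and your observation that only one success among $\Theta(n)$ independent trials is needed correctly disposes of the probabilistic loss. Two pieces of bookkeeping that you flag deserve a sentence each in a full write-up: (i) the deterministic SIO conversion $\Psi_M\to\Psi_{M'}$ for $M'\leqslant M$ (pure-state majorization under SIO), needed to round the random $\binom{k}{w}$ down to a power of two in the ``if'' direction of (a); and (ii) the fact that the measure-select-and-swap feedforward is itself an SIO, since permutations of the incoherent product basis and partial traces are strictly incoherent. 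Neither is a gap.
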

The condition $ \eta (\rho) = 1 $ means that there exists a submatrix of the density matrix $ \rho $ that corresponds to a pure coherent state. Note that Theorem \ref{thm:cohdistill} is fully described with the quantity $ \eta $. This result will be recalled in Sec. \ref{sec:outMC} where we study the output coherence of channels.

\section{\label{sec:preservability}Resource preservability of quantum channels}

In the previous section, we have shown that if we have noisy resource states in a product state [Output state in Fig. \ref{fig:prep} (a)], their resource cannot be concentrated by free operations. Then, one may wonder whether it is possible to preserve more resources against noise by initially preparing correlated resource states as illustrated in Fig. \ref{fig:prep} (b).
	\begin{figure}[t]
	\centering \includegraphics[clip=true, width=0.9\columnwidth]{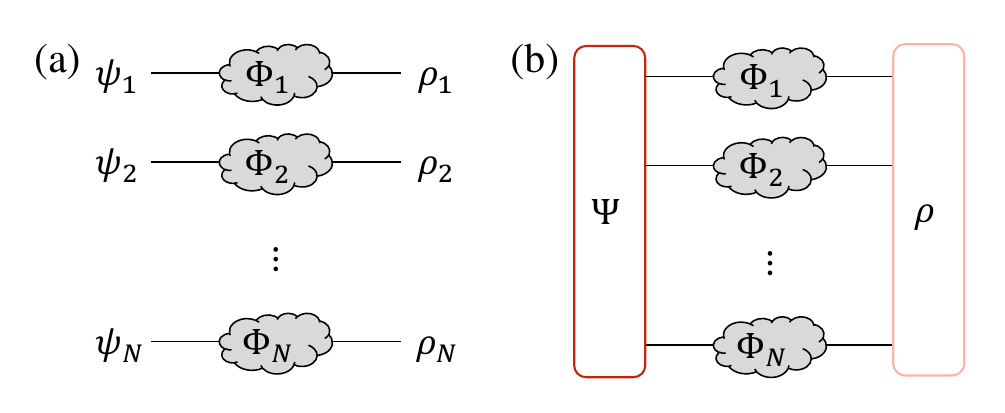}
	\caption{\label{fig:prep} (a) Resources are prepared in a product state and then undergo independent channel noises. (b) Resources are prepared in a correlated state before channel noises.}
	\end{figure}
To address this problem, we introduce a measure which estimates the output resource of channels, defined as
	\begin{equation}
	\tilde{R} (\Phi) = \max_\rho \left\{ R \left( \Phi (\rho) \right) \right\} ,
	\end{equation}
where the maximization is over all possible input states.

We remark on a difference between our measure $ \tilde{R} $ and other measures in channel resource theories. In channel resource theories \cite{Chitambar2019,Dana2017,Takagi2019,Liu2020}, one is interested in how much resource channels can create. For example, the resource-generating power \cite{Dana2017,Mani2015,Zhuang2018} is one of the widely studied channel resource measures. In this approach, free operations $ \Phi \in \mathcal{O} $ are considered as free channel resources with $ R (\Phi) = 0 $. However, when studying the effect of noisy channels, this approach is not appropriate, because a noisy environment generally does not  create resources. In contrast, we are interested in how much resource channels can preserve. For example, the identity channel perfectly preserves resources, while it is a resource-nongenerating channel belonging to $ \mathcal{O} $. Similar arguments have been made in recent studies of channel resource theories \cite{Hsieh2020,Saxena2020}. Particularly, Ref. \cite{Hsieh2020} investigated the axiomatic properties of resource preservability measures. In this sense, our measure $ \tilde{R} $ is appropriate to study resources preserved by noisy channels.

Now, by investigating the tensorization property, we have the following observation.
	\begin{observation} \label{obs:preserve}
	If a channel measure $ \tilde{R} $ satisfies the tensorization property, that is,
		\begin{equation} \label{eq:chantensor}
		\tilde{R} \left( \Phi_1 \otimes \Phi_2 \right) = \max \left\{ \tilde{R} (\Phi_1) , \tilde{R} (\Phi_2) \right\} ,
		\end{equation}
	employing correlated input states does not preserve more resources in the output states than employing product input states.
	\end{observation}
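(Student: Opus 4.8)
The plan is to write both the ``correlated'' and the ``product'' preservability as maximizations of $R$ over input states and then collapse the correlated one with the hypothesised tensorization property. Fix channels $\Phi_1,\dots,\Phi_N$ acting on the $N$ inputs. By the very definition of $\tilde R$, the largest resource retained in the output when an arbitrary, possibly correlated, state $\rho$ is sent through the joint channel equals $\tilde R(\Phi_1\otimes\cdots\otimes\Phi_N)$, whereas with uncorrelated inputs the best one can do is $\sup_{\rho_1,\dots,\rho_N} R\big(\Phi_1(\rho_1)\otimes\cdots\otimes\Phi_N(\rho_N)\big)$. Observation \ref{obs:preserve} amounts to the claim that the former never exceeds the latter (in fact the two are equal).

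First I would upgrade the two-channel identity \eqref{eq:chantensor} to $N$ channels by a one-line induction: writing $\Phi_1\otimes\cdots\otimes\Phi_N=\Phi_1\otimes(\Phi_2\otimes\cdots\otimes\Phi_N)$ and applying \eqref{eq:chantensor} repeatedly gives $\tilde R(\Phi_1\otimes\cdots\otimes\Phi_N)=\max_{1\le j\le N}\tilde R(\Phi_j)$ (for $N=2$ no induction is needed). This already caps the correlated-input preservability at $\max_j\tilde R(\Phi_j)$.

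Next I would show this cap is met by a product strategy, so the two optima coincide. Choose $j^{*}$ attaining $\max_j\tilde R(\Phi_j)$ and a (near-)optimal local input $\rho_{j^{*}}$ with $R(\Phi_{j^{*}}(\rho_{j^{*}}))=\tilde R(\Phi_{j^{*}})$, and feed arbitrary states (e.g.\ free states) into the remaining channels. The output factorises as $\Phi_{j^{*}}(\rho_{j^{*}})\otimes\tau$ for some $\tau$, and $R(\Phi_{j^{*}}(\rho_{j^{*}})\otimes\tau)\ge R(\Phi_{j^{*}}(\rho_{j^{*}}))=\max_j\tilde R(\Phi_j)$ because appending a subsystem cannot lower $R$ — this follows from the tensorization of $R$ on states, $R(\xi\otimes\tau)=\max\{R(\xi),R(\tau)\}\ge R(\xi)$, which holds in all cases of interest (indeed it is the ingredient from which \eqref{eq:chantensor} itself is derived), or alternatively from monotonicity (\textit{C}2a) together with discarding a subsystem being a free operation. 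Hence product inputs already reach $\max_j\tilde R(\Phi_j)$, matching the correlated-input bound.

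The computations are all routine; the step needing care is bookkeeping rather than any hard inequality. Because $\tilde R$ is defined through a maximization over the \emph{whole} joint input space, one must be sure that the multipartite output of the product strategy — which carries irrelevant ``junk'' on every channel other than $\Phi_{j^{*}}$ — still has resource at least the single-channel optimum, which is exactly what the non-decrease of $R$ under appending systems guarantees. Once that is secured, the induction on the number of channels closes the argument.
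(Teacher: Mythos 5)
Your argument is correct and is essentially the paper's own reasoning: the paper states Observation~\ref{obs:preserve} without a separate proof, treating it as immediate from the definition of $\tilde{R}$ as a maximization over all (including correlated) inputs together with the hypothesis \eqref{eq:chantensor}, which caps the correlated optimum at $\max_j \tilde{R}(\Phi_j)$ --- a value already attained by a local input on the best single channel. Your additional care about the product strategy reaching this cap via $R(\xi\otimes\tau)\geqslant R(\xi)$ is a reasonable tightening but does not change the route.
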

It is straightforward to show that once the state resource measure $R$ has the property $ R ( \rho_1 \otimes \rho_2 ) > \max \left\{ R (\rho_1) , R (\rho_2) \right\} $, Eq. (\ref{eq:chantensor}) for the channel resource $\tilde{R}$ does not hold. The tensorization property of state resource measure (\textit{C}4) is a necessary condition for the tensorization property of channel output resource measure (\ref{eq:chantensor}). However, the converse is not straightforward because the optimization of $ \tilde{R} \left( \Phi_1 \otimes \Phi_2 \right) $ is taken in a larger Hilbert space with correlated input states allowed than the optimization of $ \tilde{R} (\Phi_1) $ and of $ \tilde{R} (\Phi_2) $. We investigate output resource measures to which Observation \ref{obs:preserve} may apply and show their physical meaning in resource preparation.

\subsection{\label{sec:NDchannel}Output nonclassicality depth of channels}

Let us consider the output nonclassicality depth of channels defined as
	\begin{equation}
	\tilde{\tau}_m (\Phi) = \max_\rho \left\{ \tau_m \left( \Phi (\rho) \right) \right\} .
	\end{equation}
Note that a similar measure was employed in the framework of process output nonclassicality \cite{Sabapathy2016}. In what follows, we prove the tensorization property of $ \tilde{\tau}_m $, which implies that multiple use of channels with correlated or entangled input is not helpful to preserve the nonclassicality. We first show the result in the extreme case where $ \tilde{\tau}_m (\Phi) = 0 $.

\subsubsection{Nonclassicality-breaking channels}

Because $ \tau_m $ is a faithful measure, $ \tilde{\tau}_m (\Phi) = 0 $ implies that the channel $ \Phi $ outputs only classical states. That is, the channel is nonclassicality-breaking. The characterization of nonclassicality-breaking channels (NBCs) have been studied for the class of bosonic Gaussian channels \cite{Ivan2013,Sabapathy2015}. In the following, we investigate general properties of NBC, not constrained to bosonic Gaussian channels. Note that the properties of NBC are intrinsic properties of the channel, which does not rely on the nonclassicality measure.

To investigate the properties of NBC, we use the following lemma (See Appendix \ref{sec:NBCproof} for proof).
	\begin{lemma} \label{lem:NBCexp}
	Any nonclassicality-breaking channel $ \Phi_\textrm{NB} $ can be expressed in the form
		\begin{equation} \label{eq:NBCexp}
		\Phi_\textrm{NB} (\rho) = \int d^{2n}\bm{\alpha} \tr \left[ \rho M_{\bm{\alpha}} \right] \ket{\bm{\alpha}}\bra{\bm{\alpha}} ,
		\end{equation}
	where $ n $ is the number of output modes and $ \{ M_{\bm{\alpha}} \}_{\bm{\alpha}} $ a set of POVM operators with $\int d^{2n}\bm{\alpha}M_{\bm{\alpha}}=I$.
	\end{lemma}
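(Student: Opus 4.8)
The plan is to characterize nonclassicality-breaking channels (NBCs) by exploiting the fact that the output of $\Phi_\textrm{NB}$ always has a nonnegative $P$ function, combined with the structural freedom in representing completely positive maps. First I would recall that every channel $\Phi_\textrm{NB}$ acting on the input system admits a representation via its (generalized) Choi operator, or equivalently via an operator-valued measure: writing the output $P$ function of $\Phi_\textrm{NB}(\rho)$ as $P_{\Phi_\textrm{NB}(\rho)}(\bm{\alpha})$, the map $\rho \mapsto P_{\Phi_\textrm{NB}(\rho)}(\bm{\alpha})$ is, for each fixed $\bm{\alpha}$, a linear functional on density operators that is nonnegative on all states (precisely because $\Phi_\textrm{NB}$ is nonclassicality-breaking, so the output $P$ function is a genuine probability density). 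By duality, any linear functional on trace-class operators that is nonnegative on all positive operators is represented by a positive operator $M_{\bm{\alpha}}$, i.e. $P_{\Phi_\textrm{NB}(\rho)}(\bm{\alpha}) = \tr[\rho M_{\bm{\alpha}}]$ with $M_{\bm{\alpha}} \geqslant 0$.

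The next step is to reconstruct $\Phi_\textrm{NB}(\rho)$ from its $P$ function. Since $P_{\Phi_\textrm{NB}(\rho)}(\bm{\alpha}) \geqslant 0$ and the output state has unit trace, we have
	\begin{equation*}
	\Phi_\textrm{NB}(\rho) = \int d^{2n}\bm{\alpha}\, P_{\Phi_\textrm{NB}(\rho)}(\bm{\alpha}) \ket{\bm{\alpha}}\bra{\bm{\alpha}} = \int d^{2n}\bm{\alpha}\, \tr[\rho M_{\bm{\alpha}}] \ket{\bm{\alpha}}\bra{\bm{\alpha}},
	\end{equation*}
which is exactly Eq.~(\ref{eq:NBCexp}). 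The normalization $\int d^{2n}\bm{\alpha}\, M_{\bm{\alpha}} = I$ then follows by taking the trace of both sides and using trace preservation: $1 = \tr[\Phi_\textrm{NB}(\rho)] = \int d^{2n}\bm{\alpha}\, \tr[\rho M_{\bm{\alpha}}] = \tr[\rho \int d^{2n}\bm{\alpha}\, M_{\bm{\alpha}}]$ for every $\rho$, forcing $\int d^{2n}\bm{\alpha}\, M_{\bm{\alpha}} = I$. Complete positivity of $\Phi_\textrm{NB}$ is automatic from this form, since it exhibits $\Phi_\textrm{NB}$ as a measure-and-prepare (entanglement-breaking) channel; conversely one should note the statement is only an ``if'' direction (every NBC has this form), so one does not need to worry about which $\{M_{\bm{\alpha}}\}$ yield legitimate channels.

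The main obstacle I anticipate is the rigorous justification of the duality step in infinite dimensions: the $P$ function is generally a distribution rather than an honest function, so I would first need to argue that the nonclassicality-breaking hypothesis forces $P_{\Phi_\textrm{NB}(\rho)}$ to be a bona fide (regular, nonnegative) measure, and then establish the representability $P_{\Phi_\textrm{NB}(\rho)}(\bm{\alpha}) = \tr[\rho M_{\bm{\alpha}}]$ with enough regularity in $\bm{\alpha}$ that the integral $\int d^{2n}\bm{\alpha}\, \tr[\rho M_{\bm{\alpha}}]\ket{\bm{\alpha}}\bra{\bm{\alpha}}$ converges in trace norm and defines the channel. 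One clean route is to first write $\Phi_\textrm{NB}$ through its Choi–Jamio\l{}kowski operator $J_{\Phi_\textrm{NB}}$, note that nonclassicality-breaking means the reduced ``output slice'' has nonnegative $P$ representation, and then define $M_{\bm{\alpha}}$ directly as the (transpose of the) output-$P$-density of $J_{\Phi_\textrm{NB}}$ smeared against the input leg — positivity of $M_{\bm{\alpha}}$ following from positivity of $J_{\Phi_\textrm{NB}}$ together with nonnegativity of that $P$ density. The measure-theoretic fine print is best deferred to the appendix; the algebraic skeleton above is the whole content.
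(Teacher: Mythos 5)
Your argument is correct and reaches the same endpoint, but it takes a genuinely different route from the paper. The paper first proves that any nonclassicality-breaking channel is entanglement-breaking (by showing that the matrix of output $P$ functions $P_{ij}(\bm{\alpha})$ of $\Phi_\textrm{NB}(\ket{u_i}\bra{u_j})$ on a Schmidt basis must be positive semidefinite, so that $\Phi_\textrm{NB}\otimes I$ maps pure entangled states to separable ones), then invokes the Holevo form $\Phi_\textrm{EB}(\rho)=\sum_k \tr[\rho M_k]\sigma_k$ of entanglement-breaking channels, expands each $\sigma_k$ in coherent projectors, and reads off $M_{\bm{\alpha}}=\frac{1}{\pi^n}\sum_k P_{\sigma_k}(\bm{\alpha})M_k$; the positivity of this $M_{\bm{\alpha}}$ is then argued exactly as you do, from $\tr[\rho M_{\bm{\alpha}}]\geqslant 0$ for all states $\rho$. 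You skip the entanglement-breaking detour and the Horodecki--Shor--Ruskai structure theorem entirely, defining $M_{\bm{\alpha}}$ directly as the Riesz representative of the positive linear functional $\rho\mapsto P_{\Phi_\textrm{NB}(\rho)}(\bm{\alpha})$. Your route is shorter and more self-contained; the paper's route buys the independently interesting corollary that NBCs are entanglement-breaking (which it does not otherwise use, but which parallels the Gaussian-channel results it cites) and yields $M_{\bm{\alpha}}$ as an explicit combination of a bona fide discrete POVM with fixed distributions $P_{\sigma_k}(\bm{\alpha})$, so less weight rests on pointwise evaluation of an output $P$ function that may have no density. That last caveat --- e.g.\ a replacement channel $\rho\mapsto\ket{\bm{\alpha}_0}\bra{\bm{\alpha}_0}$ forces $M_{\bm{\alpha}}\propto\delta^{2n}(\bm{\alpha}-\bm{\alpha}_0)I$, so $\{M_{\bm{\alpha}}\}$ is really only the density of an operator-valued measure --- afflicts both arguments equally; you at least flag it explicitly, and deferring it is consistent with the paper's level of rigor.
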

Using Lemma \ref{lem:NBCexp}, a multiple use of NBCs, e.g. $ \Phi_1 $ and $ \Phi_2 $, can be expressed as a simple extension of Eq. (\ref{eq:NBCexp}) as
	\begin{eqnarray}
	\Phi_1 \otimes \Phi_2 (\rho) & = & \int d^{2n_1}\bm{\alpha_1} d^{2n_2}\bm{\alpha_2} \tr \left[ \rho ( M_{\bm{\alpha_1}} \otimes M_{\bm{\alpha_2}} ) \right] \nonumber \\
	& & \qquad \times \ket{\bm{\alpha_1}}\bra{\bm{\alpha_1}} \otimes \ket{\bm{\alpha_2}}\bra{\bm{\alpha_2}} .
	\end{eqnarray}
We then have the following proposition.
	\begin{proposition} \label{prop:NBCtensor}
	If both $ \Phi_1 $ and $ \Phi_2 $ are nonclassicality-breaking, then $ \Phi_1 \otimes \Phi_2 $ is nonclassicality-breaking as well.
	\end{proposition}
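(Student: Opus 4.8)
The plan is to show directly that the output state $(\Phi_1\otimes\Phi_2)(\rho)$ has a positive $P$ function for every input $\rho$, using the integral representation established just before the proposition. From the displayed formula,
\begin{equation}
(\Phi_1\otimes\Phi_2)(\rho) = \int d^{2n_1}\bm{\alpha_1}\, d^{2n_2}\bm{\alpha_2}\; p(\bm{\alpha_1},\bm{\alpha_2})\, \ket{\bm{\alpha_1}}\bra{\bm{\alpha_1}}\otimes\ket{\bm{\alpha_2}}\bra{\bm{\alpha_2}},
\end{equation}
where $p(\bm{\alpha_1},\bm{\alpha_2}) = \tr[\rho(M_{\bm{\alpha_1}}\otimes M_{\bm{\alpha_2}})]$. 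The key observation is that this is manifestly a convex combination (a continuous mixture) of product coherent states $\ket{\bm{\alpha_1}}\bra{\bm{\alpha_1}}\otimes\ket{\bm{\alpha_2}}\bra{\bm{\alpha_2}} = \ket{(\bm{\alpha_1},\bm{\alpha_2})}\bra{(\bm{\alpha_1},\bm{\alpha_2})}$, which is itself an $(n_1+n_2)$-mode coherent state. So the whole plan reduces to checking two things about the weight $p$.

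First I would verify that $p(\bm{\alpha_1},\bm{\alpha_2})\geqslant 0$: this is immediate since $M_{\bm{\alpha_1}}\otimes M_{\bm{\alpha_2}}$ is a positive operator (a tensor product of POVM elements, each $M_{\bm\alpha}\geqslant 0$ by Lemma \ref{lem:NBCexp}) and $\rho\geqslant 0$, so the trace is nonnegative. Second I would check normalization: $\int d^{2n_1}\bm{\alpha_1}\,d^{2n_2}\bm{\alpha_2}\; p(\bm{\alpha_1},\bm{\alpha_2}) = \tr[\rho\,(\int d^{2n_1}\bm{\alpha_1} M_{\bm{\alpha_1}})\otimes(\int d^{2n_2}\bm{\alpha_2} M_{\bm{\alpha_2}})] = \tr[\rho\,(I\otimes I)] = 1$, using the resolution of identity $\int d^{2n}\bm\alpha\, M_{\bm\alpha}=I$ from the lemma and $\tr\rho=1$. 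Hence $p$ is a legitimate probability density on the joint phase space, so the displayed expression exhibits $(\Phi_1\otimes\Phi_2)(\rho)$ explicitly in the classical form of Eq.~(\ref{eq:classical}), with $P$ function equal to $p\geqslant 0$. Therefore $(\Phi_1\otimes\Phi_2)(\rho)$ is classical for every $\rho$, i.e.\ $\Phi_1\otimes\Phi_2$ is nonclassicality-breaking.

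There is essentially no hard part here, because the heavy lifting is done by Lemma \ref{lem:NBCexp}; the only point deserving a word of care is that the tensor product of two single-mode (or multi-mode) coherent states is again a coherent state of the composite system, so that the mixture is over genuine coherent states of $n_1+n_2$ modes and not merely over product states of an arbitrary nonclassical ancilla structure --- but this is built into the definition of multimode coherent states via $\hat D(\bm\alpha)=\bigotimes_i \hat D(\alpha_i)$. If one wants to be fully rigorous about the continuous integral (e.g.\ ensuring the mixture converges in trace norm), one can note that $p$ is an $L^1$ density and the integrand is trace-class with uniformly bounded trace norm, so dominated convergence applies; I would relegate such measure-theoretic bookkeeping to a footnote or omit it, since it parallels the treatment already implicit in the statement of Lemma \ref{lem:NBCexp}. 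The argument extends verbatim to any finite number of NBCs by iterating, which I would mention in one sentence.
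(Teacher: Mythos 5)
Your proposal is correct and follows the same route as the paper: both rest entirely on the representation of Lemma \ref{lem:NBCexp}, extend it to the product channel, and read off that the weight $\tr[\rho(M_{\bm{\alpha_1}}\otimes M_{\bm{\alpha_2}})]$ is a nonnegative normalized density over multimode coherent states, so the output is classical. The paper leaves these verifications implicit, whereas you spell out the positivity and normalization checks explicitly; there is no substantive difference.
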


\subsubsection{Tensorization property}

Let us introduce a single-mode additive thermal noise channel $ \mathcal{E}_\delta $, which can be understood as random displacement with a Gaussian distribution in phase space. When $ \mathcal{E}_\delta^{\otimes n} $ acts on an $n$-mode state $ \rho $, it yields the output state as
	\begin{equation}
	\mathcal{E}_\delta^{\otimes n} (\rho) = \frac{1}{(\pi\delta)^n} \int d^{2n}\bm{\gamma} e^{-\frac{1}{\delta}|\bm{\gamma}|^2} \hat{D}(\bm{\gamma}) \rho \hat{D}^\dagger(\bm{\gamma}) .
	\end{equation}
The $P$ function is transformed under $ \mathcal{E}_\delta $ accordingly as
	\begin{eqnarray} \label{eq:noiseP}
	P_{\mathcal{E}_\delta^{\otimes n} (\rho)} (\bm{\alpha}) & = & \frac{1}{(\pi\delta)^n} \int d^{2n}\bm{\gamma} e^{-\frac{1}{\delta}|\bm{\gamma}|^2} P_\rho (\bm{\alpha}-\bm{\gamma}) \nonumber \\
	& = & \frac{1}{(\pi\delta)^n} \int d^{2n}\bm{\gamma'} e^{-\frac{1}{\delta}|\bm{\gamma}'-\bm{\alpha}|^2} P_\rho (\bm{\gamma}') \nonumber \\
	& = & W_\rho (\bm{\alpha}; \delta) .
	\end{eqnarray}
From Eq. (\ref{eq:noiseP}) in conjunction with the definition of nonclassicality depth, we readily see that those states with $ \tau_m (\rho) \leqslant \delta $ become classical after the channel noise $ \mathcal{E}_\delta^{\otimes n} $ while states remain nonclassical if $ \tau_m > \delta $. We can make a similar conclusion for the output nonclassicality depth of channels, that is, $ \mathcal{E}_\delta^{\otimes n} \circ \Phi $ is nonclassicality-breaking if and only if $ \tilde{\tau}_m (\Phi) \leqslant \delta $.

Now we obtain the following theorem.
	\begin{theorem}
	The output nonclassicality depth of channels $ \tilde{\tau}_m $ satisfies the tensorization property, that is,
		\begin{equation}
		\tilde{\tau}_m ( \Phi_1 \otimes \Phi_2 ) = \max \left\{ \tilde{\tau}_m (\Phi_1), \tilde{\tau}_m (\Phi_2) \right\} .
		\end{equation}
	\end{theorem}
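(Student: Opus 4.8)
The plan is to prove the two inequalities $\tilde{\tau}_m(\Phi_1\otimes\Phi_2)\geqslant\max\{\tilde{\tau}_m(\Phi_1),\tilde{\tau}_m(\Phi_2)\}$ and $\tilde{\tau}_m(\Phi_1\otimes\Phi_2)\leqslant\max\{\tilde{\tau}_m(\Phi_1),\tilde{\tau}_m(\Phi_2)\}$ separately. The first (``$\geqslant$'') direction is the easy half: given an optimal input $\rho_1$ for $\Phi_1$, one can feed $\rho_1\otimes\sigma$ into $\Phi_1\otimes\Phi_2$ for any fixed free (classical) $\sigma$, so that the output is $\Phi_1(\rho_1)\otimes\Phi_2(\sigma)$. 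By the tensorization property of the state measure $\tau_m$ (Proposition~1, property (\textit{C}4)), $\tau_m$ of this product equals $\max\{\tau_m(\Phi_1(\rho_1)),\tau_m(\Phi_2(\sigma))\}\geqslant\tau_m(\Phi_1(\rho_1))=\tilde{\tau}_m(\Phi_1)$, and symmetrically for $\Phi_2$; taking the max gives the bound. (One should be slightly careful about whether the maximization defining $\tilde\tau_m$ is attained, but replacing ``optimal'' with ``near-optimal'' and taking a limit handles this routinely.)

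The nontrivial direction is ``$\leqslant$'', and I would obtain it via the characterization established just before the theorem: for any channel $\Psi$ and any $\delta\geqslant0$, the composed channel $\mathcal{E}_\delta^{\otimes n}\circ\Psi$ is nonclassicality-breaking if and only if $\tilde\tau_m(\Psi)\leqslant\delta$. Set $\delta:=\max\{\tilde{\tau}_m(\Phi_1),\tilde{\tau}_m(\Phi_2)\}$. Then $\tilde\tau_m(\Phi_1)\leqslant\delta$ and $\tilde\tau_m(\Phi_2)\leqslant\delta$, so by the characterization both $\mathcal{E}_\delta^{\otimes n_1}\circ\Phi_1$ and $\mathcal{E}_\delta^{\otimes n_2}\circ\Phi_2$ are nonclassicality-breaking channels. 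By Proposition~\ref{prop:NBCtensor}, their tensor product $(\mathcal{E}_\delta^{\otimes n_1}\circ\Phi_1)\otimes(\mathcal{E}_\delta^{\otimes n_2}\circ\Phi_2)$ is nonclassicality-breaking. Since the additive thermal noise channels act mode-wise, this product channel equals $\mathcal{E}_\delta^{\otimes(n_1+n_2)}\circ(\Phi_1\otimes\Phi_2)$. Applying the characterization in the reverse direction to $\Psi=\Phi_1\otimes\Phi_2$, the fact that $\mathcal{E}_\delta^{\otimes(n_1+n_2)}\circ(\Phi_1\otimes\Phi_2)$ is nonclassicality-breaking forces $\tilde\tau_m(\Phi_1\otimes\Phi_2)\leqslant\delta=\max\{\tilde{\tau}_m(\Phi_1),\tilde{\tau}_m(\Phi_2)\}$, which is exactly the desired inequality.

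Combining the two inequalities yields equality and completes the proof. The main obstacle I anticipate is not in the logical skeleton above but in making the ``if and only if'' characterization $\mathcal{E}_\delta^{\otimes n}\circ\Phi$ NBC $\iff$ $\tilde\tau_m(\Phi)\leqslant\delta$ genuinely tight at the boundary: the infimum in the definition of $\tau_m$ need not be attained, so for the ``only if'' part one must be careful that $\tau_m(\rho)\leqslant\delta$ really does make $\mathcal{E}_\delta$ output a classical state (this uses that $W_\rho(\boldsymbol\alpha;\tau)$ being positive is a closed condition in $\tau$, or equivalently that $\tau_m$ is an infimum that behaves well under the thermal-noise convolution semigroup $W_{\mathcal{E}_\delta(\rho)}(\cdot;\tau)=W_\rho(\cdot;\tau+\delta)$). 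Provided that subtlety is dispatched — essentially via the semigroup identity $\mathcal{E}_\delta\circ\mathcal{E}_{\delta'}=\mathcal{E}_{\delta+\delta'}$ and monotonicity of positivity in the noise parameter — the rest is a direct assembly of Proposition~\ref{prop:NBCtensor} and the state-level tensorization property.
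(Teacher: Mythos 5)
Your proposal is correct and follows essentially the same route as the paper: the upper bound comes from combining the characterization ``$\mathcal{E}_\delta^{\otimes n}\circ\Phi$ is nonclassicality-breaking iff $\tilde{\tau}_m(\Phi)\leqslant\delta$'' with Proposition~\ref{prop:NBCtensor}, and the lower bound from a product input together with the state-level tensorization of $\tau_m$. The boundary/attainment subtlety you flag is also present (and treated at the same level of detail) in the paper's argument, so your sketch of how to dispatch it via the semigroup property is a reasonable supplement rather than a deviation.
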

	\begin{proof}
	For two channels $ \Phi_1 $ and $ \Phi_2 $ which output $ n_1 $ and $ n_2 $ modes respectively, let us assume $ \tilde{\tau}_m (\Phi_1) \geqslant \tilde{\tau}_m (\Phi_2) $ and denote $ \tau^\ast = \tilde{\tau}_m (\Phi_1) $. Then, from Proposition \ref{prop:NBCtensor},
		\begin{equation}
		\mathcal{E}_\delta^{\otimes (n_1+n_2)} \circ (\Phi_1 \otimes \Phi_2) = ( \mathcal{E}_\delta^{n_1} \circ \Phi_1 ) \otimes ( \mathcal{E}_\delta^{n_2} \circ \Phi_2 ) 
		\end{equation}
is nonclassicality-breaking if $ \delta \geqslant \tau^\ast $. If $ \delta < \tau^\ast $, $ \mathcal{E}_\delta^{\otimes n_1} $ preserves the nonclassicality of $ \Phi_1 (\rho) $ for the state saturating $ \tau_m \left( \Phi_1(\rho) \right) = \tilde{\tau}_m (\Phi_1) $. Therefore, we have $ \tilde{\tau}_m ( \Phi_1 \otimes \Phi_2 ) = \tau^\ast $.
	\end{proof}
By recalling Observation \ref{obs:preserve}, this theorem implies that using correlated input states does not preserve more nonclassicality depth than using separable input states.

\subsection{\label{sec:outMC}Output maximal coherence of channels}

Now let us consider the maximal coherence discussed in Sec. \ref{sec:maxcoh}. We define the output maximal coherence of channels as
	\begin{equation}
	\tilde{\eta} (\Phi) = \max_\rho \left\{ \eta \left( \Phi (\rho) \right) \right\} .
	\end{equation}
We here make a conjecture on the tensorization property of $ \tilde{\eta} $.
	\begin{conjecture} \label{conj:outMCtensor}
	The output maximal coherence of channels $ \tilde{\tau}_m $ satisfies the tensorization property, that is,
		\begin{equation} \label{eq:outMCtensor}
		\tilde{\eta} ( \Phi_1 \otimes \Phi_2 ) = \max \left\{ \tilde{\eta} (\Phi_1), \tilde{\eta} (\Phi_2) \right\} .
		\end{equation}
	\end{conjecture}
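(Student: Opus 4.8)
Toward proving Conjecture~\ref{conj:outMCtensor}, the plan is to treat the two inequalities separately. The bound $\tilde{\eta}(\Phi_1 \otimes \Phi_2) \geqslant \max\{\tilde{\eta}(\Phi_1), \tilde{\eta}(\Phi_2)\}$ comes for free by restricting the maximization to product inputs, since $\tilde{\eta}(\Phi_1 \otimes \Phi_2) \geqslant \max_{\rho_1,\rho_2} \eta\bigl(\Phi_1(\rho_1) \otimes \Phi_2(\rho_2)\bigr) = \max_{\rho_1,\rho_2} \max\{\eta(\Phi_1(\rho_1)), \eta(\Phi_2(\rho_2))\} = \max\{\tilde{\eta}(\Phi_1), \tilde{\eta}(\Phi_2)\}$, where the middle equality is the state-level tensorization property (\textit{C}4) of $\eta$. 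All the content therefore lies in the reverse inequality: no correlated input can beat the best product input.

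For the reverse inequality, fix an arbitrary, possibly entangled, input $\rho$ and set $\sigma = (\Phi_1 \otimes \Phi_2)(\rho)$. Since $\eta(\sigma)$ is the largest off-diagonal ratio over pairs of output basis elements $a = (i_1,i_2)$ and $b = (j_1,j_2)$, we would bound that ratio case by case. If $i_1 = j_1$, the operator $\langle i_1|_1\,\sigma\,|i_1\rangle_1$ on the second output equals $\Phi_2(\omega)$ with $\omega = \langle i_1|_1 \bigl[(\Phi_1 \otimes \mathrm{id})(\rho)\bigr]|i_1\rangle_1$ a subnormalized conditional input, so by scale invariance of $\eta$ the ratio is a $2\times 2$ submatrix ratio of $\eta\bigl(\Phi_2(\omega/\tr\omega)\bigr) \leqslant \tilde{\eta}(\Phi_2)$; symmetrically, $i_2 = j_2$ gives a bound by $\tilde{\eta}(\Phi_1)$. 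The remaining case, $i_1 \neq j_1$ and $i_2 \neq j_2$, is the crux.

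For that case, we would first reduce to small output dimension: replacing $\Phi_a$ by $\Phi_a^{ij}(\cdot) = P_a \Phi_a(\cdot) P_a$ with $P_a = |i_a\rangle\langle i_a| + |j_a\rangle\langle j_a|$ (completed to a trace-preserving channel with a third, inert \emph{flag} level) leaves the relevant $2\times 2$ block of $\sigma$ unchanged, because $P_1 \otimes P_2$ passes through the tensor product, and cannot increase $\tilde{\eta}$, since the normalized outputs of $\Phi_a^{ij}$ are principal submatrices of those of $\Phi_a$; hence it suffices to prove the statement for essentially qubit-output channels. Equivalently, in terms of the adjoint unital maps, $\sigma_{(i_1 i_2),(j_1 j_2)} = \tr\bigl[\rho\,(X_1 \otimes X_2)\bigr]$ with $X_a = \Phi_a^\dagger(|j_a\rangle\langle i_a|)$, the two diagonal entries are $\tr[\rho(Y_1 \otimes Y_2)]$ and $\tr[\rho(Z_1 \otimes Z_2)]$ with $Y_a = \Phi_a^\dagger(|i_a\rangle\langle i_a|)$ and $Z_a = \Phi_a^\dagger(|j_a\rangle\langle j_a|)$, the hypothesis supplies the one-sided bounds $|\tr[\tau X_a]| \leqslant \tilde{\eta}(\Phi_a)\sqrt{\tr[\tau Y_a]\,\tr[\tau Z_a]}$ for every state $\tau$, and $\bigl(\begin{smallmatrix} Y_a & X_a^\dagger \\ X_a & Z_a \end{smallmatrix}\bigr) \geqslant 0$. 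The target is then a bilinear, Cauchy--Schwarz-type estimate bounding $|\tr[\rho(X_1\otimes X_2)]|$ by $\max\{\tilde{\eta}(\Phi_1),\tilde{\eta}(\Phi_2)\}\,\sqrt{\tr[\rho(Y_1\otimes Y_2)]\,\tr[\rho(Z_1\otimes Z_2)]}$.

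The hard part is precisely this last estimate, and we do not expect positivity together with the marginal bounds from the first two cases to suffice on their own: one can write down a positive $4\times 4$ matrix whose \emph{single-index} off-diagonals vanish while its \emph{double-index} off-diagonal is maximal, a pattern that nonetheless never arises from any $(\Phi_1^{ij}\otimes\Phi_2^{ij})(\rho)$. The argument must therefore use the product (Kraus) structure of $\Phi_1 \otimes \Phi_2$ beyond mere positivity --- for instance by writing the four relevant output amplitudes as $(F_\mu^{(i_1)}\otimes G_\nu^{(i_2)})|\chi\rangle$, $(F_\mu^{(i_1)}\otimes G_\nu^{(j_2)})|\chi\rangle$, and so on, observing that any two of them share a common factor on system $1$ or on system $2$, and showing that an overlap bound in one tensor factor multiplies an overlap bound in the other. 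Turning this \emph{multiplicativity of overlaps} into a rigorous operator-valued Cauchy--Schwarz or block-matrix argument on the Stinespring dilation is the step we expect to cost the most effort, which is why we record the result only as a conjecture; the qubit-output reduction above, the verified partial cases, and numerical checks are what constitutes the supporting evidence. Once the doubly-off-diagonal case is in place, the three cases combine to give $\eta(\sigma) \leqslant \max\{\tilde{\eta}(\Phi_1),\tilde{\eta}(\Phi_2)\}$ for every $\rho$, hence the reverse inequality, and Observation~\ref{obs:preserve} then delivers the stated conclusion on correlated inputs.
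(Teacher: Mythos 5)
A point of comparison first: the paper does not prove this statement at all. It is stated as a conjecture and supported only by numerical sampling of random Choi matrices (Fig.~\ref{fig:numerical}), so there is no proof in the paper to measure your argument against, and your proposal already goes further analytically than the paper does. The pieces you do establish are correct: the inequality $\tilde{\eta}(\Phi_1\otimes\Phi_2)\geqslant\max\{\tilde{\eta}(\Phi_1),\tilde{\eta}(\Phi_2)\}$ via product inputs and (\textit{C}4); the single-index cases $i_1=j_1$ (or $i_2=j_2$), where the relevant ratio is controlled by $\eta\bigl(\Phi_2(\omega/\tr\omega)\bigr)\leqslant\tilde{\eta}(\Phi_2)$ with $\omega=\langle i_1|(\Phi_1\otimes\mathrm{id})(\rho)|i_1\rangle$ (modulo the degenerate case $\tr\omega=0$, where positivity forces the numerator to vanish as well); the projection-plus-flag reduction to two-level outputs, which is legitimate because $\eta$ is scale invariant and a maximum over a subset of index pairs; and the Heisenberg-picture reformulation with $X_a=\Phi_a^\dagger(|j_a\rangle\langle i_a|)$.

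However, as you acknowledge, the doubly-off-diagonal case $i_1\neq j_1$, $i_2\neq j_2$ is not closed, and that is not a technicality --- it is the entire content of the conjecture. Your own observation makes this precise: the ingredients you have extracted (positivity of the $2\times2$ operator blocks together with the state-wise bounds $|\tr[\tau X_a]|\leqslant\tilde{\eta}(\Phi_a)\sqrt{\tr[\tau Y_a]\,\tr[\tau Z_a]}$) are consistent with data reproducing a maximally entangled output, for which $\eta=1$, so no argument using only those ingredients can succeed. The proposed remedy, a multiplicativity-of-overlaps statement on the Stinespring dilation, is plausible but is precisely the kind of claim that fails for other channel quantities (compare the known counterexamples to multiplicativity of maximal output purities and channel norms), so it cannot be asserted; it must be proved. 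Until that step exists the statement remains a conjecture, which is consistent with how the paper presents it. Your reduction does buy something concrete beyond the paper's numerics, though: it shows it suffices to verify (or refute) the conjecture for channels with two-dimensional (plus flag) output, a much smaller arena in which to hunt for either a proof or a counterexample.
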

Whether this conjecture holds true is not straightforward to see due to the optimization over all input states. However, we find a strong evidence to support this conjecture by conducting numerical calculations. For this purpose, we have generated pairs of random dynamical matrix (Choi matrix) using the function in QI package for Mathematica \cite{QI}. A pair of dynamical matrices respectively correspond to channels $ \Phi_A $ and $ \Phi_B $ in the spirit of Choi-Jamiolkowski isomorphism \cite{Jamiolkowski1972,Choi1975}. The output maximal coherence is estimated for $ \Phi_A $, $ \Phi_B $, and $ \Phi_A \otimes \Phi_B $ respectively by numerically evaluating Eq. (\ref{eq:outMCtensor}). We plot the obtained data in Fig. \ref{fig:numerical} with the dimension of Hilbert space set as $ \{ d_A, d_B \} = \textrm{(a)}\{2,2\}, \textrm{(b)}\{2,3\} $ ,respectively.
	\begin{figure}[t]
	\centering \includegraphics[clip=true, width=\columnwidth]{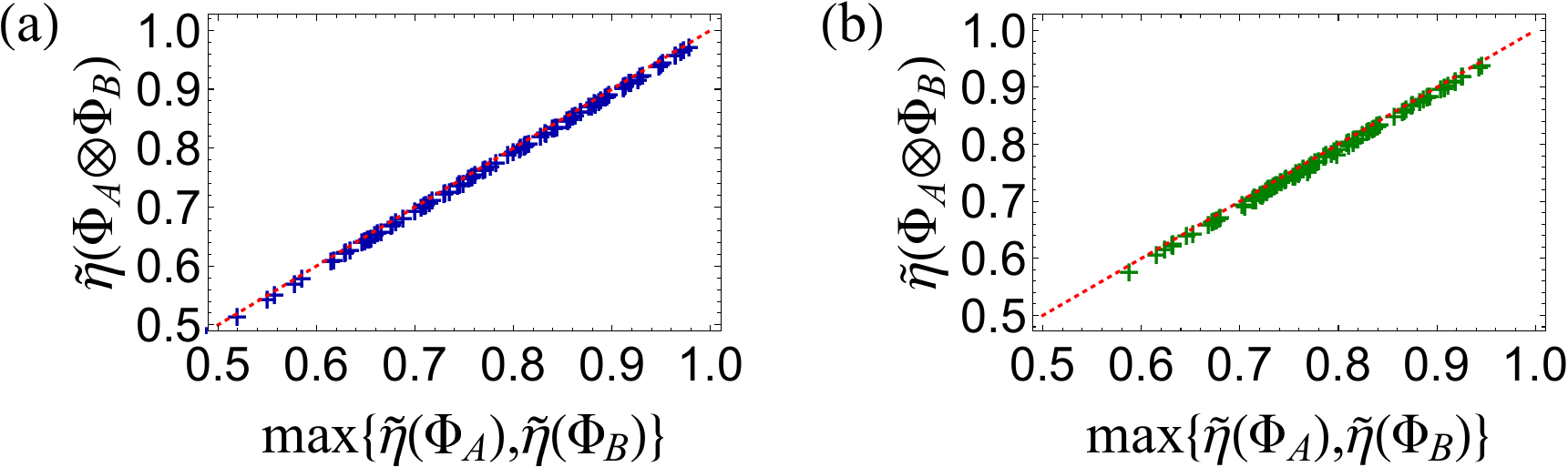}
	\caption{\label{fig:numerical} Numerical calculation of output maximal coherence through channels. The dimension of the Hilbert space of system, $ \{ d_A, d_B \} $, is given by (a)\{2,2\} and (b)\{2,3\}, respectively. Straight red dotted lines represent the tensorization property (\ref{eq:outMCtensor}).}
	\end{figure}
Our data coincides with the diagonal line almost perfectly, which confirms Conjecture \ref{conj:outMCtensor}.
	
Recalling Theorem \ref{thm:cohdistill}, the tensorization property of $ \tilde{\eta} $ would have important consequences in SIO coherence distillation.
	\begin{conjecture} \label{conj:outcohdistill}
	\begin{enumerate}[label=(\alph*)]
	\item If channels $ \tilde{\eta} ( \Phi_j ) < 1 $ for all $ j = 1, 2, \cdots, N $, joint use of such channels in parallel outputs only SIO-nondistillable states.
	\item When a state $ \rho $ is prepared under joint channels $ \Phi_1 \otimes \Phi_2 \times \cdots \otimes \Phi_N $, the maximum distillation fidelity under SIO is given by
		\begin{equation}
		\lim_{n \to \infty} F_\textrm{SIO} ( \rho^{\otimes n}, 2 ) = \frac{1 + \max_j \left\{ \tilde{\eta}(\Phi_j) \right\}}{2} .
		\end{equation}
	\end{enumerate}
	\end{conjecture}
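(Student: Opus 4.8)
\emph{Proof strategy.} Both statements follow by feeding the tensorization property of the output maximal coherence (Conjecture~\ref{conj:outMCtensor}) into the two halves of Theorem~\ref{thm:cohdistill}; the argument is therefore conditional on Conjecture~\ref{conj:outMCtensor}. The first move is to promote Eq.~(\ref{eq:outMCtensor}) from two to $N$ channels. Grouping $\Phi_1\otimes\cdots\otimes\Phi_N=\Phi_1\otimes(\Phi_2\otimes\cdots\otimes\Phi_N)$, one application of Eq.~(\ref{eq:outMCtensor}) followed by the induction hypothesis on the $(N-1)$-fold factor yields
\[
\tilde{\eta}(\Phi_1\otimes\cdots\otimes\Phi_N)=\max_{j=1,\dots,N}\tilde{\eta}(\Phi_j).
\]
Here I would also point out, via Observation~\ref{obs:preserve}, that the maximization defining the left-hand side already ranges over correlated and entangled inputs across the $N$ tensor factors, so no such input can beat the best single factor.

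\emph{Part (a).} Let $\sigma$ be any output of the joint channel, $\sigma=(\Phi_1\otimes\cdots\otimes\Phi_N)(\rho)$ for some input $\rho$. By the definition of $\tilde{\eta}$ and the $N$-fold tensorization just established, $\eta(\sigma)\leqslant\tilde{\eta}(\Phi_1\otimes\cdots\otimes\Phi_N)=\max_j\tilde{\eta}(\Phi_j)<1$. Theorem~\ref{thm:cohdistill}(a) states that $C_{d,\textrm{SIO}}(\sigma)>0$ if and only if $\eta(\sigma)=1$; hence $C_{d,\textrm{SIO}}(\sigma)=0$. Since $\sigma$ was an arbitrary output, every state produced by $\Phi_1\otimes\cdots\otimes\Phi_N$ in parallel is SIO-nondistillable.

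\emph{Part (b).} Fix an input $\rho_{\textrm{in}}$ and write $\sigma=(\Phi_1\otimes\cdots\otimes\Phi_N)(\rho_{\textrm{in}})$; by Observation~\ref{obs:preserve} it suffices to consider product inputs at this stage. Preparing $n$ independent copies of $\sigma$ and applying Theorem~\ref{thm:cohdistill}(b) gives $\lim_{n\to\infty}F_{\textrm{SIO}}(\sigma^{\otimes n},2)=(1+\eta(\sigma))/2$. Optimizing over the preparation and using the monotonicity of $x\mapsto(1+x)/2$,
\begin{align*}
\sup_{\rho_{\textrm{in}}}\lim_{n\to\infty}F_{\textrm{SIO}}(\sigma^{\otimes n},2)
&=\frac{1+\sup_{\rho_{\textrm{in}}}\eta(\sigma)}{2}\\
&=\frac{1+\tilde{\eta}(\Phi_1\otimes\cdots\otimes\Phi_N)}{2}
=\frac{1+\max_j\tilde{\eta}(\Phi_j)}{2},
\end{align*}
where the second equality is just the definition of the output maximal coherence and the third is the $N$-fold tensorization. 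If the maximizer in the definition of $\tilde{\eta}$ is attained (as in finite dimension), the supremum is a maximum.

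\emph{Main obstacle.} The whole argument hangs on Conjecture~\ref{conj:outMCtensor}: absent a proof that $\tilde{\eta}$ tensorizes at the channel level, the conclusions inherit only the numerical support shown in Fig.~\ref{fig:numerical}. A minor technical point in part (b) is the exchange of $\lim_{n\to\infty}$ with $\sup_{\rho_{\textrm{in}}}$; the safe route is to establish the two inequalities separately --- ``$\geqslant$'' by choosing inputs whose output $\eta$ approaches $\tilde{\eta}$ and invoking the achievability half of Theorem~\ref{thm:cohdistill}(b), and ``$\leqslant$'' from $\eta(\sigma)\leqslant\tilde{\eta}$ together with its converse (optimality) half. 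The $N$-fold induction extending Eq.~(\ref{eq:outMCtensor}) is routine once the $N=2$ step is granted.
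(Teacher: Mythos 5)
Your proposal is correct and follows essentially the same route the paper intends: Conjecture~\ref{conj:outcohdistill} is obtained by feeding the (conjectured) tensorization property of $\tilde{\eta}$, Eq.~(\ref{eq:outMCtensor}), into Theorem~\ref{thm:cohdistill}(a) and (b). The paper leaves this derivation implicit since the whole statement is conditional on Conjecture~\ref{conj:outMCtensor}, and your writeup --- including the routine $N$-fold extension and the care about exchanging the supremum over preparations with the asymptotic limit --- fills in exactly that chain of reasoning.
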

This demonstrates that using joint channels with correlated input states does not improve the performance of SIO coherence distillation of output states compared with using single channel with the maximum $ \tilde{\eta} $.

\section{\label{sec:discussion}Discussion}

In this work, we have studied the tensorization property of quantum resources to find the limitations on manipulating quantum resources. If a resource measure satisfies the tensorization proprty as well as the usual monotonicity, it is impossible to concentrate multiple noisy states to a single state with a higher degree of resource by free operations. Furthermore, we have introduced the output resource measure of channels which satisfies the tensorization property to show that joint channels with correlated input states are not helpful to preserve quantum resources. We have established our results in the general framework of QRTs so that it can be applied to any quantum resources. Numerous resources studied intensively so far satisfy the tensorization property, as we have illustrated our results with nonclassicality depth, metrological power, Gaussian quantum resource and maximal coherence. 

It would be an interesting study to further investigate whether there can exist certain resource measures which satisfy the tensorization property in other resource theories such as quantum non-Gaussianity \cite{Takagi2018,Albarelli2018,Park2019} or magic states \cite{Veitch2014,Howard2017}. Another direction of study to pursue is to find operational resource measures satisfying the tensorization property. As the metrological power of nonclassicality has a significant operational meaning in quantum metrology, it may be worth investigating whether one can find a similar measure in the resource theory of coherence.

We find a close connection between the tensorization property of the state resource measure and that of the channel resource measure for the case of nonclassicality depth and the maximal coherence. It is another interesting question whether there exists a resource measure which satisfies the tensorization property as a state measure but does not as a channel output measure, which can have an analogy to the additivity problems of quantum channels.

We hope our work could provide some useful insight into what is allowed or prohibited in manipulating quantum resources under noisy circumstances. Another important contribution has recently been made on the no-go theorem for resource purification \cite{Fang2020}. All of these studies will make a crucial basis for developing practical protocols to overcome unavoidable noise in manipulating resources for quantum informational tasks.

\section*{Acknowledgement}
J.L., K.B., and J.K. are supported by KIAS Individual Grants (No. CG073102, No. CG074701, and No. CG014604) at Korea Institute for Advanced Study, respectively. J.P. and K.B. acknowledge supports by the National Research Foundation of Korea (NRF) grants (No. NRF-2019R1G1A1002337) and (Nos. NRF-2021M3E4A1038213, NRF-2022M3E4A1077094) funded by the Korean government (MSIT), respectively. K.B. is supported by the Ministry of Science, ICT and Future Planning (MSIP) by the Institute of Information and Communications Technology Planning and Evaluation grant funded by the Korean government (No. 2019-0-00003, “Research and Development of Core Technologies for Programming, Running, Implementing and Validating of Fault-Tolerant Quantum Computing System”) and (No. 2022-0-00463, “Development of a quantum repeater in optical fiber networks for quantum internet”). H.N. is supported by an NPRP Grant 13S-0205-200258 from Qatar National Research Fund.

\appendix

\section{\label{sec:NDproperty}Properties of nonclassicality depth}

\emph{Faithfulness:} Faithfulness (\textit{C}1b) as well as (\textit{C}1a) holds by the definition of the nonclassicality depth.

\emph{Tensorization property:} Let us assume, without loss of generality, $ \tau_m(\rho) \geqslant \tau_m(\sigma) $ and $ \tau^\ast \equiv \tau_m(\rho) $. Observing that
	\begin{equation} \label{eq:Wproduct}
	W_{\rho \otimes \sigma} (\bm{\alpha}; \tau) = W_\rho (\bm{\alpha}_1; \tau) W_\sigma (\bm{\alpha}_2; \tau) ,
	\end{equation}
where $ \bm{\alpha} $ denotes the collection of $ \bm{\alpha}_1 $ and $ \bm{\alpha}_2 $, becomes a positive probability function if $ \tau = \tau^\ast $, we find $ \tau_m (\rho \otimes \sigma) \leqslant \tau^\ast $. Further, if $ \tau < \tau^\ast $, $ W_\rho (\bm{\alpha}_1; \tau) $ must show a negative value, and thus Eq. (\ref{eq:Wproduct}) cannot be a positive function. Therefore, we have the tensoriztaion property (\textit{C}4) written as
	\begin{equation} \label{eq:NDtensorization}
	\tau_m (\rho \otimes \sigma) = \max \{ \tau_m(\rho), \tau_m(\sigma) \} .
	\end{equation}

\emph{Monotonicity:} We prove the strictest form of  monotonicity (\textit{C}2c) and then (\textit{C}2a) and (\textit{C}2b) automatically follow. A passive linear unitary operation $ \hat{U}_P $ can be regarded as a rotation in phase space, which is described by the transformation $ W_{\hat{U}_P\rho\hat{U}_P^\dagger} (\bm{R}\cdot\bm{\alpha}; \tau) = W_\rho (\bm{\alpha}; \tau) $, where $ \bm{R} $ is an orthogonal matrix satifying $ \bm{R}^\mathsf{T} = \bm{R}^{-1} $ and $ \det\bm{R} = 1$. Because the rotation preserves positivity of the function $ W $, the nonclassicality depth $ \tau_m $ is invariant under passive linear unitaries. Similarly, $ \tau_m $ is invariant under displacements $ \hat{D}(\bm{\gamma}) $ described by the transformation $ W_{\hat{D}(\bm{\gamma})\rho\hat{D}^\dagger(\bm{\gamma})} (\bm{\alpha}; \tau) = W_\rho (\bm{\alpha}-\bm{\gamma}; \tau) $. The nonclassicality depth is also invariant under addition of classical ancilla modes due to the tensorization property (\ref{eq:NDtensorization}). Let us now consider a classical measurement performed on the last $ k $ modes out of an $n$-mode state $ \rho $. Projection onto multimode coherent states is described by $ \hat{M}_{\bm{\xi}} \equiv \frac{1}{\pi^k} \ket{\bm{\xi}}\bra{\bm{\xi}} $, where $ \ket{\bm{\xi}} $ represents $k$-mode coherent states. After measurement, the state can be written by the following P representation:
	\begin{eqnarray}
	\rho' & = & \frac{1}{\pi^k p(\bm{\xi}|\rho)} \bra{\bm{\xi}} \rho \ket{\bm{\xi}}  \nonumber \\
	& = & \frac{1}{\pi^k p(\bm{\xi}|\rho)} \int d^{2n}\bm{\alpha} \braket{\bm{\xi}}{\bm{\alpha}} P_\rho (\bm{\alpha}) \braket{\bm{\alpha}}{\bm{\xi}} \nonumber \\
	& \equiv & \int d^{2(n-k)}\bm{\alpha}\prime P_{\rho'} (\bm{\alpha}') \ket{\bm{\alpha}'}\bra{\bm{\alpha}'} ,
	\end{eqnarray}
where $ p(\bm{\xi}|\rho) = \tr \bra{\bm{\xi}} \rho \ket{\bm{\xi}} $ and
	\begin{widetext}
	\begin{equation}
	P_{\rho'} (\bm{\alpha}') = \frac{1}{\pi^k p(\bm{\xi}|\rho)} \int d^{2k}\overline{\bm{\alpha}} e^{-| \overline{\bm{\alpha}}-\bm{\xi} |^2} P_\rho (\alpha_1, \alpha_2 \cdots, \alpha_{n-k}, \overline{\alpha}_1, \overline{\alpha}_2, \cdots, \overline{\alpha}_k) ,	
	\end{equation}
with $ \overline{\bm{\alpha}} = \{ \overline{\alpha}_1, \overline{\alpha}_2, \cdots, \overline{\alpha}_k \} $. Then, $s$-parametrized quasiprobability distribution of $ \rho' $ becomes
	\begin{eqnarray}
	W_{\rho'} (\bm{\alpha}'; \tau) & = & \left( \frac{1}{\pi\tau} \right)^{n-k} \int d^{2(n-k)}\bm{\beta} P_{\rho'} (\bm{\beta}) e^{-\frac{1}{\tau} | \bm{\beta}-\bm{\alpha}' |^2 } \nonumber \\
	& = & \frac{1}{\pi^n \tau^{n-k} p(\bm{\xi}|\rho)} \int d^{2(n-k)}\bm{\beta} e^{-\frac{1}{\tau} | \bm{\beta}-\bm{\alpha}' |^2 } \int d^{2k}\overline{\bm{\alpha}} e^{-| \overline{\bm{\alpha}}-\bm{\xi} |^2}  P_\rho (\beta_1, \beta_2 \cdots, \beta_{n-k}, \overline{\alpha}_1, \overline{\alpha}_2, \cdots, \overline{\alpha}_k) \nonumber \\
	& = & \frac{\tau^k}{p(\bm{\xi}|\rho)} \int d^{2k}{\bm{\xi}}' e^{-\frac{1}{1-\tau} | \bm{\xi}'-\bm{\xi} |^2} W_{\rho} (\alpha'_1, \alpha'_2, \cdots, \alpha'_{n-k}, \xi'_1, \xi'_2, \cdots, \xi'_k; \tau) .
	\end{eqnarray}
	\end{widetext}
If $ \tau = \tau_m(\rho) $, $ W_{\rho} (\bm{\alpha}; \tau) $ is positive so that $ W_{\rho'} (\bm{\alpha}'; \tau) $ is positive as well. Therefore, we have $ \tau_m (\rho') \leqslant \tau_m(\rho) $. In the classical mixing process $ \sigma = \sum_i \Phi_i (\rho) $, we have
	\begin{equation}
	W_\sigma (\bm{\alpha}; \tau) = \sum_i W_{\Phi_i(\rho)} (\bm{\alpha}; \tau) .
	\end{equation}
Although $ W_{\Phi_i(\rho)} $ on the right-hand side are not normalized, positivity is not affected by the normalization factor. Let $ \tau^\ast \equiv \max \left\{ \tau_m (\Phi_i(\rho)) \right\} $, then $ W_\sigma (\bm{\alpha}; \tau^\ast) $ is positive, which implies that $ \tau_m (\sigma) \leqslant \tau^\ast \leqslant \tau_m (\rho) $. We have shown that the nonclassicality depth is nonincreasing under all classical processes we considered as free operations. Especially, the nonclassicality depth is nonincreasing under postselection of measurement outcome, which implies the monotonicity (\textit{C}2c).

\section{\label{sec:NBCproof}Proof of Lemma \ref{lem:NBCexp}.}

We first prove that NBCs are necessarily entanglement-breaking. Since a quantum channel $ \Phi $ is linear, it suffices to prove the lemma for pure states only. Suppose a $(n+m)$-mode pure entangled state written in the Schmidt decomposition as $ \ket{\Psi} = \sum_i c_i \ket{u_i} \ket{v_i} $. Let us consider a pure $n$-mode state $ \ket{\phi} = \sum_i d_i \ket{u_i} $, which is a superposition of Schmidt basis $ \ket{u_i} $. The application of NBC $ \Phi_\textrm{NB} $ on $ \ket{\phi} $ reads
	\begin{equation}
	\Phi_\textrm{NB} (\ket{\phi}\bra{\phi}) = \frac{1}{\pi^n} \int d^{2n}\bm{\alpha} \sum_{i,j} d_i P_{ij} (\bm{\alpha}) d_j^\ast \ket{\bm{\alpha}}\bra{\bm{\alpha}} ,
	\end{equation}
where $ P_{ij} (\bm{\alpha}) $ is a P function of $ \Phi_\textrm{NB} (\ket{u_i}\bra{u_j}) $. Because the channel is nonclassicality-breaking, the term $ \sum_{i,j} d_i P_{ij} (\bm{\alpha}) d_j^\ast $ should be nonnegative for any state $ \ket{\phi} $, which is fulfilled if and only if the matrix with elements $ P_{ij}(\bm{\alpha}) $ is positive semidefinite. The application of $ \Phi_\textrm{NB} $ on the first $n$ modes of $ \ket{\Psi} $ reads
	\begin{eqnarray}
	& & (\Phi_\textrm{NB} \otimes I) (\ket{\Psi}\bra{\Psi}) \nonumber \\
	& & = \frac{1}{\pi^n} \int d^{2n}\bm{\alpha} \sum_{i,j} c_i P_{ij} (\bm{\alpha}) c_j^\ast \ket{\bm{\alpha}}\bra{\bm{\alpha}} \otimes \ket{v_i}\bra{v_j} \nonumber \\
	& & = \frac{1}{\pi^n} \int d^{2n}\bm{\alpha} \ket{\bm{\alpha}}\bra{\bm{\alpha}} \otimes \sum_{i,j} c_i P_{ij}(\bm{\alpha}) c_j^\ast \ket{v_i}\bra{v_j} .
	\end{eqnarray}
As $ \left\{ P_{ij}(\bm{\alpha}) \right\}_{ij} $ is a positive semidefinite matrix, the last term $ \sum_{i,j} c_i P_{ij}(\bm{\alpha}) c_j^\ast \ket{v_i}\bra{v_j} $ represents an unnormalized quantum state. Therefore the application of $ \Phi $ always produces separable output states, that is, $ \Phi $ is entanglement-breaking.

An entanglement-breaking channel can always be expressed in the Holevo form $ \Phi_\textrm{EB} (\rho) = \sum_k \tr \left[ \rho M_k \right] \sigma_k $ \cite{Horodecki2003}, where $ \{ M_k \}_k $ is a set of POVM and $ \sigma_k $'s are density operators. Because $ \Phi_\textrm{NB} $ is also entanglement-breaking, we can write
	\begin{eqnarray}
	\Phi_\textrm{NB} (\rho) & = & \sum_k \tr \left[ \rho M_k \right] \frac{1}{\pi^n} \int d^{2n}\bm{\alpha} P_{\sigma_k}(\bm{\alpha}) \ket{\bm{\alpha}}\bra{\bm{\alpha}} \nonumber \\
	& = & \frac{1}{\pi^n} \int d^{2n}\bm{\alpha} \ket{\bm{\alpha}}\bra{\bm{\alpha}} \tr \left[ \rho \sum_k P_{\sigma_k}(\bm{\alpha}) M_k \right] \nonumber \\
	& \equiv & \int d^{2n}\bm{\alpha} \ket{\bm{\alpha}}\bra{\bm{\alpha}} \tr \left[ \rho M_{\bm{\alpha}} \right] .
	\end{eqnarray}
Because of the definition of NBC, $ \tr \left[ \rho \sum_k P_{\sigma_k}(\bm{\alpha}) M_k \right] $ should be nonnegative for any quantum state $ \rho $. This guarantees that $ M_{\bm{\alpha}} \equiv \frac{1}{\pi^n} \sum_k P_{\sigma_k}(\bm{\alpha}) M_k $ is a positive operator and thus $ \left\{ M_{\bm{\alpha}} \right\}_{\bm{\alpha}} $ is a set of POVM.

\end{document}